\newtheorem{theorem}{Theorem}
\newtheorem{corollary}[theorem]{Corollary}
\newtheorem{lemma}[theorem]{Lemma}
\newcommand{\dens}[1]{\ket{#1}\!\!\bra{#1}}
\newcommand{\braakett}[2]{\langle\!\langle #1|#2\rangle\!\rangle}
\newcommand{\kett}[1]{|{#1}{\rangle\!\rangle}}
\newcommand{\braa}[1]{{\langle\!\langle}{#1}|}
\newcommand{\proj}[1]{\ket{#1} \!\! \bra{#1}}
\DeclareMathOperator{\Tr}{Tr}
\DeclareMathOperator\tr{Tr}
\newcommand{\CC}{\mathbb{C}}
\newcommand{\EE}{\mathbb{E}}
\newcommand{\VV}{\mathbb{V}}
\newcommand{\PP}{\mathbb{P}}
\newcommand{\mc}[1]{\mathcal{#1}}
\newcommand{\ct}{{}^\dagger}
\renewcommand{\Pr}{\operatorname{\PP}}
\newcommand{\tn}[1]{^{\otimes#1}}
\newcommand{\norm}[1]{\left\Vert #1 \right\Vert}
\DeclarePairedDelimiter\abs{\lvert}{\rvert}
\DeclarePairedDelimiter\ceil{\lceil}{\rceil}
\DeclarePairedDelimiter\parens{\lparen}{\rparen}
\DeclarePairedDelimiter\braces{\lbrace}{\rbrace}
\DeclarePairedDelimiter\bracks{\lbrack}{\rbrack}
\newcommand{\ot}{\otimes}
\newcommand{\eps}{\varepsilon}
\renewcommand{\epsilon}{\varepsilon}
\newcommand{\HS}{\text{HS}}
\newcommand{\T}{\mathrm{T}}
\begin{document}

%=============================================================================
\title{Thrifty shadow estimation: re-using quantum circuits and bounding tails}
\author{Jonas Helsen}
\affiliation{QuSoft and CWI, The Netherlands}
\author{Michael Walter}
\affiliation{Faculty of Computer Science, Ruhr University Bochum, Germany}
% \date{\today}
\begin{abstract}
Shadow estimation is a recent protocol that allows estimating exponentially many expectation values of a quantum state from ``classical shadows'', obtained by applying random quantum circuits and computational basis measurements.
In this paper we study the statistical efficiency of this approach in light of near-term quantum computing.
We propose a more practical variant of the protocol, \emph{thrifty shadow estimation}, in which quantum circuits are reused many times instead of having to be freshly generated for each measurement.
We show that reuse is maximally effective when sampling Haar random unitaries, and maximally ineffective when sampling from the Clifford group, i.e., one should not reuse circuits when performing shadow estimation with the Clifford group. We provide an efficiently simulable family of quantum circuits that interpolates between these extremes, which we believe should be used instead of the Clifford group.
Finally, we consider tail bounds for shadow estimation and discuss when median-of-means estimation can be replaced with standard mean estimation.
\end{abstract}
\maketitle
%=============================================================================

A key aspect of the development of larger-scale quantum computers is the availability of protocols that can diagnose errors and noise in quantum computations.
Over the years many such protocols  have been proposed, optimizing either for informational completeness (various forms of tomography) or sampling efficiency (e.g., randomized benchmarking~\cite{magesan2011scalable,helsen2020general} or direct fidelity estimation~\cite{helsen2020general}), but not achieving both at the same time. See~\cite{eisert2020quantum} for a general overview.
Recently Huang, K\"ung and Preskill (HKP) went beyond this apparent dichotomy by proposing \emph{shadow estimation}~\cite{huang2020predicting}, a randomized protocol which extracts \emph{exponentially} many expectation values $\tr(O\rho)$ from \emph{polynomially} many copies of the state~$\rho$, with the only caveat being a restriction on the set of allowed observables~$O$~\footnote{The name being derived from a more theoretical proposal due to Aaronson~\cite{aaronson2018shadow}.}.
Shadow estimation has generated significant interest and led to several theoretical follow-up works~\cite{zhao2020fermionic,hu2021classical} and experimental applications~\cite{huggins2022unbiasing,struchalin2021experimental} (see also~\cite{elben2022randomized} for a comprehensive overview).
At its face value, the protocol is extremely simple:
upon receiving a state~$\rho$ one generates a random $n$-qubit circuit~$U$ from a circuit set~$\mathbb{U}$, applies it to the state~$\rho$, and then measures in the computational basis, obtaining a bit string~$x$.
The tuple~$(U,x)$ then forms a so-called \emph{classical shadow} of the state~$\rho$, from which expectation values~$\tr(O\rho)$ can be reconstructed by classical post-processing (see \cref{fig:shadow}~(a) for details).
The performance of the protocol depends on the circuit set $\mathbb{U}$, as well as the observables one considers.
An important case is when the circuit set is the multi-qubit Clifford group~$\mathbb{C}_n$, which is a $3$-design.
In this case, shadow tomography is efficient for observables~$O$ for which~$\tr(O^2)$ is bounded.
The Clifford group furthermore has the advantage that if the observable is, e.g., a projection onto a stabilizer state, then the classical post-processing needed is also \emph{computationally efficient} by the Gottesman-Knill theorem \cite{aaronson2004improved}, which is very useful in practice.

A key component of the HKP proposal is that \emph{every classical shadow requires an independent random circuit}.
This is critical to the mathematical argument for its statistical efficiency, but can be undesirable in practice.
Especially in near-term quantum computers, it is preferable in many systems to measure a fixed circuit multiple times to generate a large number of classical shadows.
This can already be seen in experimental implementations of shadow estimation such as~\cite{struchalin2021experimental}, which reports repeating each circuit $>10^{4}$ times, and~\cite{huggins2022unbiasing}, which reports measuring each circuit~$10^3$ times.
This is likely inspired by experience with randomized benchmarking, which similarly samples random circuits (and where the statistics of repeating circuits is well understood~\cite{wallman2014randomized,helsen2019multiqubit}).
In this work we systematically study the effect of circuit repetition for shadow tomography, which to the best of our knowledge has not been studied before, using tools from representation theory. We also apply those tools to the question of whether median-of-means estimators, another key component of the HKP proposal, are actually necessary for shadow tomography.

\begin{figure*}
  \noindent\fbox{\parbox{0.47\linewidth}{\raggedright
  \textbf{(a) Original shadow estimation protocol~\cite{huang2020predicting}:}

  \smallskip

  \emph{Data acquisition:} For $t=1,\dots,N$:
  \begin{enumerate}
  \item Draw a random circuit $\hat{U}_t \in \mathbb{U}$.
  \item Prepare $\rho$, apply the unitary $\hat{U}_t$, and measure in the computational basis.
    Record the outcome $\hat{x}_t \in \{0,1\}^n$.
  \end{enumerate}

  \emph{Prediction of expectation values:}
  \begin{enumerate}
  \item For $k=1,\dots,K$, compute the batch mean
  \[ \hat\rho_{(k)} = \frac{K}{N} \! \sum_{t=(k-1){\frac NK}+1}^{k{\frac NK}} \!\!\!\! \mathcal{F}^{-1} \bigl( \hat{U}_t^\dagger \proj{\hat{x}_t} \hat{U}_t \bigr). \]
  \item For $O \in \mathbb{O}$, output $\hat o = \operatorname{median} \, \{ \tr(O \hat\rho_{(k)}) \}_{k=1}^K$.
  \end{enumerate}
  }}
  \noindent\fbox{\parbox{0.47\linewidth}{\raggedright
  \textbf{(b) Thrifty shadow estimation (this work):}

  \smallskip

  \emph{Data acquisition:} For $t=1,\dots, N/R$:
  \begin{enumerate}
  \item Draw a random circuit $\hat{U}_t \in \mathbb{U}$.
  \item For $r=1\dots,R$:
  Prepare $\rho$, apply $\hat{U}_t$, and measure in the computational basis. Record the outcome $\hat{x}_{t,r} \in \{0,1\}^n$.
  \end{enumerate}

  \emph{Prediction of expectation values:}
  \begin{enumerate}
  \item For $k=1,\dots,K$, compute the batch mean
  \[ \!\!\!\!\!\!\!\!\hat\rho_{(k)} = \frac{RK}{N} \!\sum_{t=(k\!-\!1)\!{\frac N {RK}}\!+\!1}^{k{\frac N {RK}}} \!\!\frac1R \sum_{r=1}^R \mathcal{F}^{-1} \bigl( \hat{U}_t^\dagger \proj{\hat{x}_{t,r}} \hat{U}_t \bigr). \]
  \item For $O \in \mathbb{O}$, output $\hat o_R = \operatorname{median} \, \{ \tr(O \hat\rho_{(k)}) \}_{k=1}^K$.
  \end{enumerate}
  }}
  \caption{\label{fig:shadow}%
  (a)~\emph{The shadow estimation protocol of~\cite{huang2020predicting}:}
  A total of~$N$ measurements is performed, and each random circuit is used to obtain a single quantum measurement outcome.
  The parameter~$K$ corresponds to the number of batches in the median-of-means estimator.
  We assume that~$N$ is a multiple of~$K$.
  The quantum channel~$\mc{F}$ depends on the circuit set, and is given explicitly in \cref{eq:frame_op}.
  (b)~\emph{Thrifty shadow estimation as introduced in this work:}
  Each random circuit is re-used~$R$ times.
  We again use~$N$ to indicate the total number of measurements, and thus~$N/R$ random circuits are generated.
  The parameter~$K$ again corresponds to the number of batches of the median-of-means estimator.
  We assume here that~$N$ is a multiple of~$RK$.
  Note that we sample at least one random circuit per batch, as required for the median-of-means estimator.}
\end{figure*}

%=============================================================================
\section{Thrifty shadow estimation}
%=============================================================================
We introduce \emph{thrifty shadow estimation}, our variant of shadow estimation that re-uses quantum circuits and can be significantly more economic in practice.
The standard and thrifty protocols protocols are summarized in \cref{fig:shadow}.

We write~$N$ for the total number of measurements, $R$ for the number of times that a random circuit is re-used (including the first time) and~$K$ for the number of batches in the median-of-means estimator.
We will assume that $N$ is a multiple of~$KR$ throughout this section.
Thus, the protocol uses $N/R$ random quantum circuits.
Note that thrifty shadow estimation reduces to ordinary shadow estimation for~$R=1$.
To analyze its statistical performance, note that the thrifty estimator $\hat{o}_R$ is the median-of-means estimator for $N/R$ many i.i.d.\ copies of the random variable
\begin{equation}\label{eq:referenced in supplemental}
\mathbf{X}_R = \frac1R \sum_{r=1}^R \mathbf{X}^{(r)},
\end{equation}
where~$\mathbf{X}^{(r)} = \tr\big(O \mathcal F^{-1} (\mathbf{U}\dens{\mathbf{x}^{(r)}} \mathbf{U}\ct)\big)$,
with~$\mathbf{U}$ drawn uniformly at random from the circuit set~$\mathbb{U}$, and~$\mathbf{x}^{(1)},\dots,\mathbf{x}^{(R)}$ drawn i.i.d.\ from the conditional distribution
\begin{align*}
  p(x|U) = \bra{x} U \rho U\ct \ket{x}.
\end{align*}
Finally, $\mathcal{F}^{-1}$ is the inverse of the quantum channel~$\mathcal F$,
\begin{align}\label{eq:frame_op}
  \mathcal{F}(A) := \sum_{x\in \{0,1\}^n} \mathbb{E}_{U\in \mathbb{U}}  U\ct \dens{x}U \bra{a}UAU\ct \ket{x} ,
\end{align}
associated with the circuit set $\mathbb{U}$ and the computational basis POVM~%
\footnote{The quantum channel~$\mathcal F$ is also called a \emph{frame operator}. While it need not be invertible in general, it is invertible for all circuit sets considered in this Letter (as well as those considered in all other shadow tomography protocols that we are aware of).}.
It can be shown that $\EE (\mathbf{X}_R) = \Tr(O \rho)$.

This directly suggests the following protocol for estimating expectation values~$\Tr(O \rho)$: sample $R$ times from the distribution $p(x|U)$, compute the corresponding states~$\mathcal{F}^{-1} (U\ct \dens{x} U)$ and construct an estimator for the mean.
Concretely, it was shown (for standard shadow tomography) in \cite{huang2020predicting} that if one obtains~$N$ random samples~$\{(\hat{U}_t, \hat{x}_t)\}_{t=1}^N$, corresponding to $N$~independent random circuits, groups those into~$K$ equal-size batches, and computes the \emph{median-of-means} estimator~$\hat o$ as in \cref{fig:shadow}~(a),
then one can obtain with high probability an accurate estimate of the desired expectation value. This directly generalizes to thrifty shadow estimation.

In particular, if we set the batch size~$K = \ceil{8\log(1/\delta)}$ for~$\delta \in (0,1)$, then \cite[Theorem~2]{lugosi2019mean} implies that for any fixed observable~$O$,
\begin{align*}
  \abs[\big]{\hat o_R - \Tr(O \rho)} \leq \sqrt{\frac{32 \, \mathbb{V}_R(O,\rho)\log(1/\delta)}{N/R}},
\end{align*}
with probability at least~$1-\delta$, where $\mathbb{V}_R(O,\rho)$ is the variance of the random variable~$\mathbf{X}_R$. Our first result characterizes this variance:
\begin{lemma}\label{lem:var_thrift}
The variance of the random variable~$\mathbf{X}_R$ is given by
\begin{align}\label{eq:var_thrift}
  \mathbb{V}_R(O, \rho)
&= \frac1R \mathbb{V}(O, \rho) + \frac{R-1}R \mathbb{V}_{\!*}(O, \rho),
\end{align}
where $\mathbb{V}(O, \rho)$ is the variance of the random variable $\mathbb{X}_1$, as in ordinary shadow estimation, while
\begin{equation}\label{eq:V4}
\begin{aligned}
  \mathbb{V}_{\!*}(O, \rho)
&:= \VV(\EE(\mathbf{X}_1|\mathbf{U})) \\
&\;= \mathbb{V}_U\bigl(\EE_x\tr\big(O\mathcal{F}^{-1} (U\ct \dens{x} U)\big)\bigr).
\end{aligned}
\end{equation}
\end{lemma}
For the sake of brevity we postpone the proof of this lemma and all following results to the Supplemental Material.
For~$R=1$ we recover the performance guarantee of ordinary shadow estimation \cite{huang2020predicting}.
To analyze the thrifty case, $R>1$, we need to estimate the term $\mathbb{V}_{\!*}(O, \rho)$, which depends on the fourth moment of the random circuits.
A straightforward corollary of \cref{lem:var_thrift} (using the law of total variance) is
\begin{align*}
  \mathbb{V}_R(O,\rho) \leq \mathbb{V}(O,\rho).
\end{align*}
However, this does \emph{not} imply that thrifty shadow estimation is always better than ordinary shadow estimation.
In fact the above argument allows for a range of possibilities, going from~$\mathbb{V}_R(O,\rho) \approx \mathbb{V}(O,\rho)/R$, in which case thrifty shadow estimation recovers the guarantees of ordinary shadow estimation for the same number of measurements (but might be preferable due to the lower cost of circuit reuse, as discussed in the introduction), to $\mathbb{V}_R(O,\rho) \approx \mathbb{V}(O,\rho)$, in which case setting~$R>1$ would be useless. We will see that both scenarios arise naturally when one performs thrifty shadow estimation with a unitary $4$ design or the multiqubit Clifford group respectively. We also give a parametrized family of circuit models that elegantly interpolates between these extremes.

%------------------------------------------------
\smallskip\textbf{Unitary 4-designs.}
%------------------------------------------------
We begin by analyzing the variance of thrifty shadow estimation for any circuit set that is a unitary 4-design. Our objective is to calculate \cref{eq:var_thrift}.
We are interested in the limit of many qubits, meaning we will be happy with estimates that include $\mathcal O(2^{-n})$ terms in all expressions. We obtain the following theorem.
\begin{theorem}\label{thm:4design}
The variance of thrifty shadow estimation with any 4-design circuit set satisfies
\begin{equation*}
\mathbb{V}_R(O,\rho) = \frac{1}{R} \mathbb{V}(O,\rho) + \frac{R-1}R \mathcal O(2^{-n}\tr(O^2))
\end{equation*}
for any traceless observable~$O$, with $\mathbb{V}(O, \rho)$ the variance associated with standard shadow estimation.
\end{theorem}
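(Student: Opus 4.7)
By \cref{lem:var_thrift} the theorem reduces to showing that $\mathbb{V}_{\!*}(O,\rho) = \mc{O}(2^{-n}\tr(O^2))$, since then
\[
\mathbb{V}_R(O,\rho) = \tfrac{1}{R}\mathbb{V}(O,\rho) + \tfrac{R-1}{R}\mathbb{V}_{\!*}(O,\rho)
\]
immediately yields the claim. For traceless $O$, \cref{eq:inverse F} gives $\mathcal{F}^{-1}(O) = (2^n+1)O$, so the inner quantity in \cref{eq:V4} simplifies to
\[
f(U) := \EE_x\bigl[\braa{O}\mathcal{F}^{-1}\mc{U}^\dagger\kett{x}\bigr]
= (2^n+1)\sum_{x} \bra{x}U\rho U^\dagger\ket{x}\bra{x}UOU^\dagger\ket{x}.
\]
Squaring $f(U)$ produces a quartic function of $\mc{U}$ that can be rewritten as a trace on four tensor copies of the $n$-qubit Hilbert space,
\[
f(U)^2 = (2^n+1)^2\,\braa{P}\mc{U}^{\otimes 4}\kett{\rho\otimes O\otimes\rho\otimes O}, \qquad P := (\Pi_{\mathrm{diag}})_{12}\otimes(\Pi_{\mathrm{diag}})_{34},
\]
with $\Pi_{\mathrm{diag}} = \sum_x \ket{xx}\bra{xx}$ acting on the indicated tensor slots. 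Since $\mathbb{U}$ is a 4-design, averaging $\mc{U}^{\otimes 4}$ over $\mathbb{U}$ coincides with the Haar 4-th moment operator $\mc{M}^{U(2^n),(4)}$ from \cref{subsec:unitary}.

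Next, I would apply the Weingarten decomposition together with the diagonal-dominance estimate $W^{U(2^n),(4)} = 2^{-4n}(I + 2^{-n}F)$, $\|F\|=\mc{O}(1)$ (\cref{eq:weingarten bound unitary}), to split $\EE_U[f(U)^2]$ into a leading diagonal sum
\[
L := (2^n+1)^2\, 2^{-4n}\sum_{\pi\in S_4}\tr(R_{\pi^{-1}}P)\,\tr\bigl(R_{\pi^{-1}}(\rho\otimes O\otimes\rho\otimes O)\bigr)
\]
plus an off-diagonal correction. The correction is controlled by $\mc{O}(2^{-n}\tr(O^2))$: one uses $|\tr(R_\pi P)|\le 2^{2n}$ (since $P$ has rank $2^{2n}$) together with $|\tr(R_\pi(\rho\otimes O\otimes\rho\otimes O))|\lesssim \tr(O^2)$, which follows by factorizing the trace along the cycles of $\pi$ and applying H\"older together with $\|\rho\|\le 1$ and $\tr(\rho)=1$.

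It then remains to evaluate $L$. I would enumerate $\pi\in S_4$ by cycle type and exploit two structural observations: (i)~since $\tr(O)=0$, every $\pi$ whose cycle decomposition isolates an $O$-slot (positions $2$ or $4$) contributes zero to the trace factor; (ii)~$\tr(R_\pi P) = 2^{2n}$ exactly when $\pi$ lies in the stabilizer $\{e,(12),(34),(12)(34)\}$ of $P$ under the pair structure $\{1,2\}\sqcup\{3,4\}$, while $\tr(R_\pi P)\le 2^n$ for all other $\pi$. The unique element surviving both restrictions is $\pi=(12)(34)$, which contributes $(1+\mc{O}(2^{-n}))\tr(\rho O)^2$ to $L$. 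All remaining surviving permutations contribute at order $\mc{O}(2^{-n}\tr(O^2))$ by the same H\"older-type estimates used for the off-diagonal correction. Subtracting $(\EE_U f(U))^2 = \tr(\rho O)^2$ cancels the leading term exactly, yielding $\mathbb{V}_{\!*}(O,\rho) = \mc{O}(2^{-n}\tr(O^2))$.

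The main technical obstacle is the clean cancellation described in the previous paragraph: one must verify that $(12)(34)$ is the \emph{unique} permutation contributing at order $\mc{O}(1)$ and that its coefficient matches precisely that of $(\EE_U f(U))^2$, so that no $\mc{O}(1)$ residual survives. The stabilizer argument above reduces this to a short case analysis of how each $\pi\in S_4$ couples the two pair-blocks $\{1,2\}$ and $\{3,4\}$ and interacts with the traceless-$O$ constraint; while conceptually routine, this step requires careful bookkeeping over the 24 elements of $S_4$ together with the combinatorial prefactors.
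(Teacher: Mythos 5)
Your proposal is correct and follows essentially the same route as the paper: reduce via \cref{lem:var_thrift} to bounding $\mathbb{V}_{\!*}$, pass to the $S_4$ commutant basis with the Weingarten diagonal-dominance estimate \cref{eq:weingarten bound unitary}, and identify $(12)(34)$ as the unique permutation that both preserves the pair structure encoded in your projector $P$ and escapes the $\tr(O)=0$ kill, so that its $(1+\mathcal O(2^{-n}))\tr(\rho O)^2$ contribution cancels $(\EE_U f)^2$ up to $\mathcal O(2^{-n}\tr(O^2))$. The paper organizes the same calculation slightly differently -- it splits the double $x,\hat x$-sum into the $x=\hat x$ part, evaluated directly with the exact Haar state-average formula \cref{eq:state_av_haar} rather than Weingarten, and the $x\neq\hat x$ part, where \cref{lem:v_bound 1} and \cref{lem:v_bound 2} play precisely the roles of your H\"older bound on $\tr(R_\pi(\rho\ot O\ot\rho\ot O))$ and your stabilizer observation (ii) -- but the underlying estimates, the $S_4$ bookkeeping, and the cancellation mechanism are identical.
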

The proof of this theorem follows from Schur-Weyl duality for the unitary group, which is matched by any unitary $4$ design up to fourth order expressions (such as the variance).
A primer on Schur-Weyl duality can be found in the Supplemental Material.
We know that~$\mathbb{V}(O,\rho) \approx \tr(O^2)$, which means shadow estimation is scalable precisely when~$\tr(O^2)$ is polynomially bounded.
In this case \cref{thm:4design} tells us that~$\VV_R(O,\rho)$ and~$\VV(O,\rho)/R$ are exponentially close in the number of qubits -- in other words, circuit reuse essentially does not impact the statistical accuracy of shadow estimation. This means that thrifty shadow tomography, with access to a unitary $4$-design will in practice be preferable to standard shadow tomography. However, demanding access to an exact unitary $4$-design is a strong requirement, which we address in more detail later.

%------------------------------------------------
\smallskip\textbf{Multi-qubit Cliffords.}
%------------------------------------------------
Next we will \emph{lower bound} the variance of thrifty shadow estimation for the multi-qubit Clifford group.
In particular we will show that there are states $\rho$ and observables $O$ such that the variance $\VV_R(O,\rho)$ in \cref{eq:var_thrift} is \emph{independent} of $R$ (in the limit of many qubits).
Concretely:

\begin{theorem}\label{thm:clifford}
Consider thrifty shadow estimation with the $n$-qubit Clifford group~$\mathbb{C}_n$.
For any pure stabilizer state $\rho=\dens{S}$ and the traceless observable $O = \dens{S} - 2^{-n}I$, we have
\begin{equation*}
  \mathbb{V}_R(O,\rho) = 2+  \mathcal O(2^{-n}).
\end{equation*}
\end{theorem}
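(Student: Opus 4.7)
The plan is to apply \cref{lem:var_thrift}, which expresses
\[
\mathbb V_R(O,\rho) = \tfrac{1}{R}\mathbb V(O,\rho) + \tfrac{R-1}{R}\mathbb V_{\!*}(O,\rho),
\]
and show that both terms equal $2+\mathcal O(2^{-n})$; the convex combination then yields the claim uniformly in~$R$. The first term is immediate: since $\mathbb C_n$ is a 3-design, the closed formula~\cref{eq:V1 exact} applies, and substituting $\tr(O^2)=1-2^{-n}$, $\tr(\rho O^2)=1-2\cdot 2^{-n}+2^{-2n}$ and $\tr(\rho O)=1-2^{-n}$ gives $\mathbb V(O,\rho)=2-6\cdot 2^{-n}+\mathcal O(2^{-2n})$ after a short algebraic simplification.

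The main work is to show $\mathbb V_{\!*}(O,\rho)=2+\mathcal O(2^{-n})$, which by~\cref{eq:V4} is genuinely a fourth-moment quantity. I would first simplify the conditional mean. Using the explicit inverse frame operator $\mathcal F_n^{-1}(\sigma)=(2^n+1)\sigma-\tr(\sigma)I$, the special form of $O$ and $\rho$, and $\tr O=0$, a direct calculation gives
\[
\mathbb E(\mathbf X_1\mid\mathbf U=C) = (2^n+1)\,\Pi(C) - \bigl(1+2^{-n}\bigr), \qquad \Pi(C):=\sum_{x\in\{0,1\}^n}\abs{\bra{x}C\ket{S}}^4.
\]
Since constants drop out of the variance, $\mathbb V_{\!*}(O,\rho)=(2^n+1)^2\,\mathbb V_C[\Pi(C)]$, and the task reduces to computing the first two moments of $\Pi(C)$. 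The first one is easy: $\mathbb E_C[\Pi(C)]=2/(2^n+1)$, as follows from the 2-design property of $\mathbb C_n$ combined with the symmetric-subspace identity $\mathbb E_C[C\tn{2}\dens{S}\tn{2}C^{\ct\tn{2}}]=(I+F)/(2^n(2^n+1))$ and $\bra{x}^{\otimes 2}(I+F)\ket{x}^{\otimes 2}=2$.

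For the second moment I would write
\[
\mathbb E_C[\Pi(C)^2] = \tr\bigl(\Delta\,\mathbb E_C[C\tn{4}\dens{S}\tn{4}C^{\ct\tn{4}}]\bigr), \qquad \Delta:=\sum_{x,\hat x\in\{0,1\}^n}\dens{xx\hat x\hat x},
\]
and apply the stabilizer-state formula~\cref{eq:state_av_cliff} at $t=4$ to replace the inner average by $(2^n(2^n+1)(2^n+2)(2^n+4))^{-1}\sum_{T\in\Sigma_{4,4}}R_T$. It remains to evaluate the thirty inner products $\tr(\Delta\,R_T)$. For the $24$ permutation subspaces $T_\pi$, applying $R_\pi$ to $\ket{xx\hat x\hat x}$ gives $\tr(\Delta\,R_\pi)=2^{2n}$ for the four $\pi$ that fix each block $\{1,2\}$ and $\{3,4\}$ set-wise (namely $\pi\in\{e,(12),(34),(12)(34)\}$), and $\tr(\Delta\,R_\pi)=2^n$ for each of the other $20$ (since these force $x=\hat x$). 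For the six remaining subspaces $T\in\hat S_3$, writing $R_T=R_\pi\Pi_4$ with $\Pi_4=2^{-n}\sum_{P\in\mathbb P_n}P\tn{4}$ and noting that $R_\pi$ commutes with $\Pi_4$, the trace becomes $2^{-n}\sum_P\tr(\Delta\,R_\pi P\tn{4})$; only the diagonal Paulis $P=\pm Z^s$ contribute (because $\bra{x}P\ket{x}=0$ for any Pauli with an $X$ or $Y$ component), so the sum over $P$ collapses to a sum over $s\in\mathbb F_2^n$. A short case analysis enumerating the six cosets of $K_4$ in $S_4$ then yields $2\cdot 2^{2n}+4\cdot 2^n$ for the total $\hat S_3$ contribution (precisely the two cosets $K_4$ and $(12)K_4$, which are the ones containing a block-preserving permutation, give $2^{2n}$, while the other four give $2^n$). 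Summing,
\[
\sum_{T\in\Sigma_{4,4}}\tr(\Delta\,R_T) = 6\cdot 2^{2n}+24\cdot 2^n,
\]
so $\mathbb E_C[\Pi(C)^2]=6\cdot 2^{-2n}+\mathcal O(2^{-3n})$; combined with $\mathbb E_C[\Pi(C)]^2=4\cdot 2^{-2n}+\mathcal O(2^{-3n})$ this gives $\mathbb V_C[\Pi(C)]=2\cdot 2^{-2n}+\mathcal O(2^{-3n})$, and multiplying by $(2^n+1)^2$ yields $\mathbb V_{\!*}(O,\rho)=2+\mathcal O(2^{-n})$, as desired.

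The hard part is the case analysis for the six non-permutation subspaces $T\in\hat S_3$: they are precisely where the failure of $\mathbb C_n$ to be a 4-design manifests, and their total contribution $2\cdot 2^{2n}$ is of the same order as the leading permutation contribution $4\cdot 2^{2n}$, so they cannot be discarded as lower-order error terms. Without the explicit description of $\Sigma_{4,4}$ and the operator $\Pi_4$ supplied by~\cref{subsec:cliff}, this computation would not be accessible, which is why an approach based solely on the 3-design property cannot detect the phenomenon at issue.
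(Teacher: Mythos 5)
Your proof is correct, and it takes a genuinely different route from the paper's at the key step, namely the computation of $\mathbb{V}_{\!*}(O,\rho)$. You first reduce the conditional mean to the scalar quantity $\mathbb{E}(\mathbf X_1\mid\mathbf U = C) = (2^n+1)\Pi(C) - (1+2^{-n})$, where $\Pi(C)=\sum_x\abs{\bra xC\ket S}^4$ is the collision probability, so that $\mathbb V_{\!*}=(2^n+1)^2\mathbb V_C[\Pi(C)]$; you then compute the first two moments of $\Pi(C)$ directly from the stabilizer-state averaging formula \cref{eq:state_av_cliff} at $t=2,4$ and the explicit inner products $\braakett{x\tn2\otimes\hat x\tn2}{R_T}$ over all of $\Sigma_{4,4}$. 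The paper instead keeps the full tensor $(O\otimes\rho)\tn2$, expands it into four terms, observes that all but $\rho\tn4$ have Clifford and Haar averages that agree by the 3-design property, and computes $\mathbb V_{\!*}$ for the Clifford group by \emph{subtracting} the Haar value $\mathcal O(2^{-n}\tr(O^2))$ established in the proof of \cref{thm:4design}; this isolates only the $\hat S_3$ contribution plus a small correction. Both routes land on the same calculation (the six inner products with $R_T$ for $T\in\hat S_3$, as in \cref{lem:v_bound 2}, and the leading contribution $2\cdot 2^{2n}$ from the two subspaces $T_4$ and $(12)T_4$). Your approach is a bit more self-contained and arguably more transparent, as it exhibits $\mathbb V_{\!*}$ as $(2^n+1)^2$ times the fluctuation of a concrete observable of the Porter--Thomas type; the paper's version economizes by recycling the Haar computation. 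Your asymptotics check out: $\mathbb E_C[\Pi^2]=(6\cdot 2^{2n}+24\cdot 2^n)/(2^n(2^n+1)(2^n+2)(2^n+4))=6\cdot 2^{-2n}+\mathcal O(2^{-3n})$ and $\mathbb E_C[\Pi]^2=4/(2^n+1)^2=4\cdot 2^{-2n}+\mathcal O(2^{-3n})$, hence $\mathbb V_{\!*}=2+\mathcal O(2^{-n})$. One small point worth making explicit in a written-up version: the claim that only the diagonal Paulis $P=Z^s$ contribute to $\tr(\Delta R_\pi\Pi_4)$ uses that the permutations $\pi$ indexing $\hat S_3$ lie in the subgroup $S_3\subseteq S_4$ fixing the fourth slot, so the factor $\bra{\hat x}P\ket{\hat x}$ from that slot always survives.
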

The proof of this theorem hinges on the recently developed Schur-Weyl duality theory for the Clifford group~\cite{gross2021schur} and can be found in the Supplemental Material.
There is thus a striking divergence in behavior between the Clifford group and and $4$-design when it comes to re-using circuits. This result formalizes an observation already made in experiment~\cite{struchalin2021experimental}, and serves as a warning for future experiments using this circuit set.

%------------------------------------------------
\smallskip\textbf{An interpolating family.}
%------------------------------------------------
From the preceding results one would prefer to perform thrifty shadow estimation with $4$-design circuits where circuit reuse is maximally useful.
Unfortunately no exact constructions of unitary $4$-designs are known (for an arbitrary number of qubits~$n$).
Moreover the Clifford group is not only useful due to its statistical properties, but also because it allows for the estimator~$\hat{o}_R$ to be computed efficiently in classical post-processing whenever the associated observable is a stabilizer state or a Pauli operator (or a well-behaved combination of these).
This is a property we would like to preserve as much as possible.
With this in mind, and inspired by~\cite{haferkamp2020quantum}, we consider a family of circuit sets that interpolates between the extreme cases discussed above.
Recall that the $\T$-gate is the non-Clifford unitary
\begin{align*}
  \T = \begin{pmatrix} 1& 0 \\ 0 & e^{i\pi/4}\end{pmatrix}.
\end{align*}
For a system of $n$ qubits, we denote by $\T$ this gate but acting on the first of $n$ qubits.
Then we can consider the following finite set of quantum circuits for any natural number~$k$:
\begin{equation}\label{eq:interpolating_set}
  \mathbb{U}_k = \braces*{ C_{k} \T C_{k-1} \cdots \T C_0 \;\;\;\;|\;\;\; C_0, \dots, C_k \in \mathbb{C}_n },
\end{equation}
This set is at least a $3$-design for any~$k$.
With increasing~$k$, it is an approximate $t$-design of any order~\cite{haferkamp2020quantum}.
Moreover, computing classically the overlap $\tr(O U\dens{x}U\ct)$~for stabilizer~$O$ and~$U\in \mathbb{U}_k$, which is required for computing the estimator~$\hat o$, can be achieved in time~$O(2^{0.396k})$~\cite{qassim2021improved}.

We now show that in the limit of large system sizes~$n$, the variance of thrifty shadow estimation with the circuit set~$\mathbb{U}_k$ approaches the result for 4-designs, which is $\mathbb{V}_R(O,\rho) \approx \mathbb{V}(O,\rho) / R$ for observables of bounded Hilbert-Schmidt norm (\cref{thm:4design}), up to an error that decreases exponentially with~$k$, leading to a classical simulation cost that is inverse  polynomial in the desired error.

\begin{theorem}\label{thm:interpolating}
The variance of thrifty shadow estimation with the circuit set~$\mathbb{U}_k$ defined in \cref{eq:interpolating_set} satisfies
\begin{align*}
\mathbb{V}_R(O,\rho) -\frac{1}{R} \mathbb{V}(O,\rho)  \!&\leq \!\frac{R\!-\!1}R \mathcal O(2^{-n}\tr(O^2))\\
&\hspace{1em}+ \frac{R-1}R 30 \tr(O^2) \parens*{ 1 + \mathcal O(2^{-n}) }\\
&\hspace{7em}\times \parens*{ \frac{3}{4} + \mathcal O(2^{-n}) }^k
\end{align*}
for any traceless observable~$O$, with $\mathbb{V}(O, \rho)$ the variance associated with standard shadow estimation.
\end{theorem}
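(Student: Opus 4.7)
The plan is to follow the general strategy of the proofs of \cref{thm:4design,thm:clifford}, reducing the bound to a careful analysis of the $4$-th moment operator $\mathcal{M}^{\mathbb{U}_k,(4)}$, which factorizes naturally through the Clifford moment and the $\T$-gate. By \cref{lem:var_thrift}, $\mathbb{V}_R(O,\rho) - \frac{1}{R}\mathbb{V}(O,\rho) = \frac{R-1}{R}\mathbb{V}_{\!*}(O,\rho)$, so the lower bound in the theorem is automatic from $\mathbb{V}_{\!*} \geq 0$. To obtain the upper bound, I write $\mathbb{V}_{\!*}^{\mathbb{U}_k}(O,\rho) = \mathbb{V}_{\!*}^{U(2^n)}(O,\rho) + \Delta$, where the first summand is $\mathcal{O}(2^{-n}\tr(O^2))$ by \cref{eq:V4 unitary} and the correction
\[\Delta = (2^n+1)^2 \sum_{x, \hat{x}} \braa{x^{\otimes 2} \otimes \hat{x}^{\otimes 2}} \bigl(\mathcal{M}^{\mathbb{U}_k,(4)} - \mathcal{M}^{U(2^n),(4)}\bigr) \kett{(O \otimes \rho)^{\otimes 2}}\]
captures the failure of $\mathbb{U}_k$ to be an exact $4$-design. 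The goal is then to show $|\Delta| \leq 30\tr(O^2)(1+\mathcal{O}(2^{-n}))(3/4+\mathcal{O}(2^{-n}))^k$.

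The moment operator factorizes as $\mathcal{M}^{\mathbb{U}_k,(4)} = P(\mathcal{T}^{\otimes 4} P)^k$, where $P := \mathcal{M}^{\mathbb{C}_n,(4)}$ is the projection onto the Clifford commutant $V_C = \mathrm{span}\{R_T : T \in \Sigma_{4,4}\}$ (of dimension $30$ for $n\geq 3$) and $\mathcal{T}^{\otimes 4}$ denotes conjugation by $T^{\otimes 4}$, with $T$ acting on the first qubit of each of $4$ copies. Recalling $\Sigma_{4,4} = S_4 \cup \hat{S}_3$, let $V_U := \mathrm{span}\{R_\pi : \pi \in S_4\}$ be the $24$-dimensional unitary commutant inside $V_C$ and $V_E := \mathrm{span}\{R_\sigma \Pi_4 : \sigma \in S_3\}$ the remaining $6$-dimensional subspace. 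Since $T^{\otimes 4}$ commutes with each permutation operator $R_\pi$, the subspace $V_U$ is fixed pointwise by $\mathcal{T}^{\otimes 4}$, so the operator $A := P\mathcal{T}^{\otimes 4}P$ is block-upper-triangular with respect to $V_C = V_U \oplus V_E$. The key technical claim is that the $V_E \to V_E$ block has operator norm at most $3/4 + \mathcal{O}(2^{-n})$.

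To establish this contraction, I reorganize the Hilbert space qubit-wise, so that $\Pi_4 = \pi^{\otimes n}$ with $\pi := \frac{1}{2}\sum_{P \in \mathbb{P}_1} P^{\otimes 4}$ a single-qubit four-copy operator. Since $\mathcal{T}^{\otimes 4}$ acts only on the first such factor, the central computation reduces to the single-qubit identity $\mathcal{T}^{\otimes 4}\pi = \frac{1}{2}(I^{\otimes 4} + Z^{\otimes 4}) + \frac{1}{4}\bigl((X+Y)^{\otimes 4} + (Y-X)^{\otimes 4}\bigr)$, derived from $TIT^\dagger = I$, $TZT^\dagger = Z$, $TXT^\dagger = (X+Y)/\sqrt{2}$ and $TYT^\dagger = (Y-X)/\sqrt{2}$. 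Expanding the $(X \pm Y)^{\otimes 4}$ terms in the Pauli basis, the $I^{\otimes 4}$ and $Z^{\otimes 4}$ terms retain coefficient $1/2$ in the output, while $X^{\otimes 4}$ and $Y^{\otimes 4}$ drop to $1/4$ and new mixed tensors $X^{\otimes S} Y^{\otimes \bar S}$ with $|S|=2$ appear with coefficient $1/4$. Projecting back through $P$ via the Weingarten formula \cref{eq:moment via weingarten clifford} and using the diagonal dominance of $W^{\mathbb{C}_n,(4)}$ from \cref{eq:weingarten bound,eq:weingarten bound 2}, the mixed terms project partly into $V_U$ and partly into $V_E$; a direct calculation of the resulting coefficients shows the surviving weight in $V_E$ is $3/4$ up to $\mathcal{O}(2^{-n})$ corrections.

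Iterating gives $\|A^k|_{V_E}\| \leq (3/4 + \mathcal{O}(2^{-n}))^k$, so $\mathcal{M}^{\mathbb{U}_k,(4)} - \mathcal{M}^{U(2^n),(4)}$ vanishes on $V_U$ and contracts $V_E$ accordingly. To conclude, I expand $\kett{(O\otimes \rho)^{\otimes 2}}$ and $\braa{Y} := \sum_{x, \hat x} \braa{x^{\otimes 2} \otimes \hat x^{\otimes 2}}$ in the $\{\kett{R_T}\}_{T \in \Sigma_{4,4}}$ basis via \cref{eq:moment via weingarten clifford}, bounding each coefficient $|\braakett{R_T}{(O \otimes \rho)^{\otimes 2}}|$ by $\tr(O^2)$ through a generalization of \cref{lem:v_bound 1} (applying H\"older's inequality to the trace representation of the overlap). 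The $(2^n+1)^2$ prefactor is absorbed into the $2^{-4n}$ Weingarten normalization of \cref{eq:weingarten bound}, and the sum over the at most $|\Sigma_{4,4}| = 30$ basis elements produces the overall factor $30\tr(O^2)$ times the contraction. The main obstacle is the explicit identification of the $3/4$ contraction factor, which requires careful bookkeeping of how the mixed $X/Y$ tensor products produced by $\mathcal{T}^{\otimes 4}$ re-project onto the two sub-blocks $V_U$ and $V_E$ of the Clifford commutant, while consistently threading $\mathcal{O}(2^{-n})$ corrections through the geometric iteration so that the exponential decay in $k$ is preserved.
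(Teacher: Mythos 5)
Your high-level strategy matches the paper's: reduce to $\mathbb{V}_{\!*}$ via \cref{lem:var_thrift}, factorize the fourth-moment operator of $\mathbb{U}_k$ through the Clifford projector and $\mathcal T^{\otimes 4}$, and show that the error relative to the Haar fourth-moment is governed by a $3/4$ contraction on the ``extra'' six-dimensional part of the Clifford commutant. However, there are two genuine gaps. First, the $3/4$ contraction is asserted, not proved. Your single-qubit computation of $\mathcal T^{\otimes 4}\pi$ in the Pauli basis is correct, but the step ``a direct calculation of the resulting coefficients shows the surviving weight in $V_E$ is $3/4$'' elides the hardest part of the proof. What is actually needed is an operator-norm bound on the relevant $6\times 6$ block, which requires knowing both the diagonal entries \emph{and} that the off-diagonals are subleading; the paper supplies this in \cref{lem:t_gate}, which computes $\braa{R_T}\mathcal T^{\otimes 4}\kett{R_{T'}} = \bigl(\braakett{r_T}{r_{T'}}-4\bigr)\braakett{r_T}{r_{T'}}^{n-1}$ exactly for $T,T'\in\hat S_3$, yielding $\tfrac34\cdot 2^{4n}$ on the diagonal and at most $\tfrac12\cdot 2^{3n}$ off-diagonal. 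Your Pauli expansion alone does not determine these matrix elements, and without them the geometric decay in $k$ is not established.

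Second, the block-upper-triangular argument for $A=\mathcal M^{\mathbb C_n,(4)}\mathcal T^{\otimes4}\mathcal M^{\mathbb C_n,(4)}$ does not by itself control the error operator $\mathcal M^{\mathbb U_k,(4)} - \mathcal M^{U(2^n),(4)} = A^k - H$. In the non-orthogonal decomposition $V_C = V_U\oplus V_E$, $A^k$ has an off-diagonal block $V_E\to V_U$ equal to $C\sum_{j<k}B^j$ that does \emph{not} decay with $k$, and the claim ``$A^k - H$ vanishes on $V_U$ and contracts $V_E$'' is a leap: you must also verify the $V_E\to V_U$ part of $A^k - H$ is small. The paper sidesteps this entirely by a telescoping identity: writing $\mathcal M^{\mathbb C_n,(4)} = H + D$ with $D := \mathcal M^{\mathbb C_n,(4)} - \mathcal M^{U(2^n),(4)}$ and using $HD=DH=0$ and $\mathcal T^{\otimes4}H = H\mathcal T^{\otimes4} = H$, one gets $\mathcal M^{\mathbb U_k,(4)} - H = (D\mathcal T^{\otimes4})^kD$, whose coefficient matrix is exactly $(PQ)^kP$ with $P$ a near-rank-6 projection onto $\hat S_3$. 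Your version can likely be patched (the stray $V_E\to V_U$ block is $\mathcal O(2^{-n})$ and could be absorbed into the additive error), but as stated the decay claim is unjustified. A related imprecision: you propose to ``expand'' $\kett{(O\otimes\rho)^{\otimes2}}$ and $\braa{Y}$ in the $\{\kett{R_T}\}$ basis, but these are not in $V_C$; the clean route (used in the paper) is to pair via the vectors $(v_\psi)_T = \braakett{\psi}{R_T}$ and bound $|v_Y^\dagger (PQ)^kP\, v_Z| \leq \|v_Y\|\,\|(PQ)^kP\|\,\|v_Z\|$ using \cref{lem:v_bound 1,lem:v_bound 2}.
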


While our result is inspired by~\cite{haferkamp2020quantum}, we do not know how to deduce it directly from their approximate 4-design result.
The reason for this is again that the support of the shadow estimation probability distribution~$p(U,x)$ grows as~$\mathcal O(2^n)$, and hence any additive error term will blow up correspondingly.

The advantage of thrifty shadow estimation with the interpolating family is best seen by considering a simple cost model.
Set $\tr(O^2)=1$ for simplicity, and assume that generating a new random circuit has cost~$\alpha\geq1$ and re-using it has unit cost.
Then we can express the cost~$\mathrm{C}$ for a total of~$N$~samples as $\mathrm{C} = (N/R) (\alpha + R - 1)$.
When using a median of means estimator, the  accuracy of thrifty (or standard for $R=1$) shadow estimation with $N$ samples is proportional to $\mathbb{V}_R/(N/R)$ (provided $N\geq K \ceil{8\log(1/\delta)}$, so that the estimator is well-defined).
We can express this in terms of the total cost~$\mathrm C$ as
\begin{equation}\label{eq:acc vs cost}
\frac {\mathbb{V}_R} {N/R} = \frac{\alpha+R-1}{\mathrm{C}} \mathbb{V}_R.
\end{equation}
This can be minimized to obtain the optimal number of repetitions~$R$ for a fixed random circuit generation cost~$\alpha$ and a fixed total cost budget~$\mathrm{C}$.
For the homeopathic circuit family with~$k$ $\T$-gates, using \cref{thm:interpolating} and taking the maximal possible value of $\mathbb{V}_R$, \cref{eq:acc vs cost} reads (suppressing small factors)
\begin{equation*}
\frac {\mathbb{V}_R} {N/R} \approx \frac{1}{\mathrm{C}} \bracks*{ \parens*{ \frac{\mathbb{V}_1}{R} + 30\,\frac{R-1}{R}\bigg(\frac{3}{4}\bigg)^k } (\alpha+R-1) },
\end{equation*}
leading to an optimal choice of~$R$ (every value of $R$ corresponds to a value of $N$ at fixed cost $\mathrm{C}$) given by
\begin{equation*}
  R \approx \sqrt{\frac{(1-\alpha)|(\mathbb{V}_1 -  30\, (3/4)^k )|}{30\, (3/4)^k}} .
\end{equation*}
This implies (if $\mathbb{V}_1\neq 0$ and $\alpha>1$) that for any value of~$\alpha$ and~$\mathrm{C}$, there is a value of~$k$ such that the optimal choice of~$R$ is~$R=N/K$.
 (accounting for the batching requirement in the median-of-means estimator), corresponding to a protocol where one samples a single circuit per batch and repeats it~$N/K$ times.
The computational cost of strongly simulating a quantum circuit with~$k$~many $\T$-gates currently~%
\footnote{This is a subject of active research, and thus this scaling might improve further} scales as $\mathcal O(2^{0.396k})$~\cite{qassim2021improved}.
Hence the optimal value of $R$ scales roughly with an inverse square $(2\log(3/4)/0.396 \approx -2.08)$ with the cost of simulation.
This means that thrifty shadow tomography with maximal reuse can be implemented using a circuit set that requires only polynomially more classical computational resources as compared (traditional) shadow tomography with the multi-qubit Clifford group.
We emphasize that this is a heuristic calculation, since we are ignoring some small terms in the expression for the homeopathic variance. In particular it is only accurate in the regime of many qubits (when the $O(2^{-n})$ terms are small).
However it shows that thrifty shadow estimation using the homeopathic interpolation circuit set can be a powerful alternative to standard shadow estimation when the cost of generating new circuits is high.

Finally, we note that sampling at least one circuit per batch is a requirement for the median-of-means estimator to function.
To see this, consider the extreme scenario where only one random circuit is sampled, and repeated many times, with the measurement outcomes grouped into batches as above.
As the number of repetitions increases, the median-of-means estimator will converge to the average value for the single random circuit, with the only remaining randomness due to circuit choice.
However, this randomness can still be ill-behaved (in the sense that the distribution is heavy-tailed), precluding exponential concentration of the estimator, which is required for shadow tomography.

%=============================================================================
\section{Tail bounds for shadow estimation}
%=============================================================================
In this section we revisit the use of median-of-means estimation in shadow estimation with circuit sets that are (at least) $3$-designs.
It has been noted before that for the circuit sets of single-qubit Clifford gates and matchgates, the median-of-means estimator can be replaced by the standard mean estimator without a loss in performance \cite{zhao2020fermionic}.
In this section we show that this is also true if the circuit set is the entire unitary group, but is not true if the circuit set is the multi-qubit Clifford group.
Hence, just like for thrifty shadow estimation, the Clifford group fails to fully emulate the statistical behavior of Haar random unitaries.

For simplicity we consider shadow estimation but both results also hold for thrifty shadow estimation with~$R>1$.
Throughout this section we will write $\mathbf{X}_n=\mathbf{X}$ to explicitly indicate the number of qubits~$n$ in a subscript.

%-----------------------------------------------------------------------------
\smallskip\textbf{Unitary group}
%-----------------------------------------------------------------------------
We first consider shadow estimation with the full unitary group. Somewhat surprisingly, we can prove sub-exponential behavior for shadow tomography with the standard mean estimator.

\begin{theorem}\label{thm:unit_tails}
Consider shadow estimation with the $n$-qubit unitary group as circuit set, state~$\rho$, and traceless observable~$O$.
For $N$ i.i.d.\ copies $\mathbf{X}^{(1)}_n, \ldots \mathbf{X}_n^{(N)}$ of~$\mathbf{X}_n$, we have a Bernstein-like tail~bound:
\begin{align*}
\Pr\mleft( \abs*{ \frac1N \sum_{i=1}^N \mathbf{X}_n^{(i)} - \mathbb{E}(\mathbf{X}_n) } \geq \eps \mright)
\leq
\begin{cases}
2 \exp\mleft(- \frac{N \eps^2} {48 \norm{O}_{\HS}^2} \mright) \\ \hspace{2em}\text{ if $\eps \leq 12 \norm{O}_{\HS}$}, \\
2 \exp\mleft(- \frac{N \eps} {4 \norm{O}_{\HS}} \mright) \\ \hspace{2em}\text{ if $\eps > 12 \norm{O}_{\HS}$}.
\end{cases}
\end{align*}
\end{theorem}
This theorem again follows from Schur-Weyl duality for the unitary group and a careful accounting of the moment generating function of $\mathbb{X}_n$.

\Cref{thm:unit_tails} shows that the median-of-means estimator in \cite{huang2020predicting} can in principle be replaced by a standard empirical average, as long as one uses the full unitary group as the circuit set.
This is akin to earlier such results for the single-qubit Clifford and matchgate groups.
However, these earlier statements were a consequence of the fact that the distributions being sampled from are bounded for all~$n$ (independently of~$n$, in terms of some separate locality parameter~$k$), and it is hence not surprising that an exponential tail bound can be established.
On the other hand, in the case of shadow estimation with the unitary group (or any $3$-design), the support of the distribution diverges as~$n\to \infty$, making such a statement significantly less trivial.
We believe that this exponential tail behavior is fundamentally a property of the full unitary group, making it difficult to achieve in practice.

%-----------------------------------------------------------------------------
\smallskip\textbf{Clifford group}
%-----------------------------------------------------------------------------
We will now argue that the opposite behavior holds when one averages over the multi-qubit Clifford group instead.
By this we mean that for shadow estimation with the Clifford group no ``useful'' tail bounds are possible.
This is a somewhat awkward statement to make, as for any finite number of qubits~$n$ the distribution associated with Clifford shadow estimation is bounded on the interval~$[-2^n, 2^n]$ (for all input states and observables), so it is always possible to obtain exponential tail bounds that grow exponentially in~$n$.
In \cref{thm:clifford_tails} below we show that, roughly speaking, one cannot do better (even if $\tr(O^2)$ is bounded).

\begin{theorem}\label{thm:clifford_tails}
Consider shadow estimation with the $n$-qubit Clifford group as circuit set, any $n$-qubit stabilizer state~$\rho=\dens{S}$, and the observable~$O = \dens{S} - 2^{-n} I$, so that~$\tr(O^2) \leq 1$.
Suppose that the random variables $\mathbf{X}_n$ satisfies a tail bound of the form
\begin{align}\label{eq:tail hypothesis}
  \Pr\mleft( \abs{ \mathbf{X}_n - \mathbb{E}(\mathbf{X}_n) } \geq t \mright) \leq A \exp\mleft( - \frac {t^\beta} {B_n} \mright),
\end{align}
for constants $A,\beta>0$ and a positive sequence $(B_n)$.
Then we have that $B_n = \tilde\Omega(2^{\beta n/4})$.
\end{theorem}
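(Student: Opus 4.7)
The plan is to carry out a moment-method argument: convert the assumed tail bound into an upper bound on the central moments of~$\mathbf{X}_n$ in terms of~$B_n$, and match this against a super-exponential lower bound on the raw moments that follows from the explicit formula in~\cref{eq:m_th moment}.

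The first (routine) step integrates the tail hypothesis by parts:
\begin{align*}
  \EE\bigl(\abs{\mathbf{X}_n - \EE\mathbf{X}_n}^m\bigr)
\;\leq\; A\int_0^\infty m\,t^{m-1}\exp\bigl(-t^\beta/B_n\bigr)\,dt
\;=\; \frac{A\,m}{\beta}\, B_n^{m/\beta}\,\Gamma(m/\beta),
\end{align*}
which by Stirling's formula yields $\EE\bigl(\abs{\mathbf{X}_n - \EE\mathbf{X}_n}^m\bigr)^{1/m} = O\bigl(B_n^{1/\beta}(m/\beta)^{1/\beta}\bigr)$ for each fixed~$A,\beta$.

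The second step, which is the heart of the argument, is to lower bound the raw moments from \cref{eq:m_th moment}. Setting $a_k := \prod_{\ell=0}^{k-1}(2^\ell+1)$ (so $a_k \asymp 2^{k(k-1)/2}$), the $k$-th summand factors as
\[
  (2^n+1)^m\,2^{-n(m-k)}\prod_{\ell=0}^{k-1}\frac{2^\ell+1}{2^\ell+2^n}
  \;=\; (1+2^{-n})^{m-k}\,a_k\,\prod_{\ell=0}^{k-1}\frac{2^n+1}{2^\ell+2^n}.
\]
For $m\leq n/2$, the last product is $1 - O(m\cdot 2^{-n/2})$, so the summand is within a bounded multiplicative factor of $\binom{m}{k}(-1)^{m-k}a_k$. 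Since $a_{k-1}/a_k = 1/(2^{k-1}+1)$, the sequence $\binom{m}{k}a_k$ is strongly increasing in~$k$, and successive terms decay geometrically (by $O(m/2^{k-1})$) as we step down from $k=m$. Hence the $k=m$ term dominates the signed sum, giving
\[
  \EE(\mathbf{X}_n^m) \;\geq\; \tfrac{1}{2}\,a_m \;\asymp\; 2^{m(m-1)/2} \qquad\text{for } m \leq n/2 \text{ sufficiently large.}
\]

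The third step passes from the raw to the central moments. Since $\EE\mathbf{X}_n = 1-2^{-n}$ is bounded while $\EE(\mathbf{X}_n^m)^{1/m} \gtrsim 2^{(m-1)/2}$ grows unboundedly, for even~$m$ the reverse triangle inequality in~$L^m$ yields $\EE\bigl(\abs{\mathbf{X}_n - \EE\mathbf{X}_n}^m\bigr)^{1/m} \gtrsim 2^{(m-1)/2}$. Combining with the first step,
\[
  2^{(m-1)/2} \;\lesssim\; B_n^{1/\beta}(m/\beta)^{1/\beta}
  \quad\Longrightarrow\quad
  B_n \;\gtrsim\; 2^{\beta(m-1)/2}/m.
\]
Choosing~$m$ to be an even integer of order~$n/2$ then yields $B_n = \tilde\Omega(2^{\beta n/4})$ as claimed.

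The main obstacle is the second step: the sum in \cref{eq:m_th moment} alternates in sign, and the factor $\prod_\ell (2^n+1)/(2^\ell+2^n)$ in the $k$-th summand drifts away from~$1$ when~$\ell$ approaches~$n$. One must therefore restrict to $m\leq n/2$ (rather than the larger range~$m\leq n$ that the formula formally permits) to keep the non-dominant summands genuinely negligible, and this restriction is precisely what pins the final exponent to~$\beta n/4$.
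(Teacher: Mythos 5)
The proposal is correct and follows essentially the same route as the paper's proof: integrate the tail hypothesis by parts to bound the absolute centered moments by $\Gamma$-function asymptotics, lower-bound the raw moments from \cref{eq:m_th moment} by showing the $k=m$ term of the alternating sum dominates, and connect the two via the reverse triangle inequality in~$L^m$ (Minkowski). The only technical departure is the choice of moment order: you take $m\leq n/2$ so that the finite-$n$ correction factors $\prod_{\ell}(2^n+1)/(2^\ell+2^n)$ are uniformly $1+o(1)$, whereas the paper sets $m=n$ and tracks these corrections explicitly; both yield the $L^m$-norm threshold $\approx 2^{n/4}$ and hence the stated $B_n=\tilde\Omega(2^{\beta n/4})$.

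One remark on your closing comment: the restriction $m\leq n/2$ is what caps \emph{your} argument at exponent $\beta n/4$, but it is not an intrinsic barrier. The paper's own intermediate estimate gives $\mathbb{E}(\mathbf{X}_n^n)\geq 2^{n(n-3)/2}$ before being deliberately relaxed to $2^{n^2/4}$; had that sharper bound been carried through, the same argument would yield $(\mathbb{E}(\mathbf{X}_n^n))^{1/n}\gtrsim 2^{n/2}$ and hence $B_n=\tilde\Omega(2^{\beta n/2})$. So the $\beta n/4$ exponent in the theorem statement reflects a presentational simplification rather than a fundamental limit of the moment method. Also note that your dominance argument as sketched (successive ratio $O(m/2^{k-1})$) is only geometric for $k\gtrsim\log_2 m$; to cleanly bound the low-$k$ tail it is simpler to mimic the paper's estimate $\sum_{k\leq m-2}\binom{m}{k}a_k\leq a_{m-2}\cdot 2^m$, which suffices.
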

The key technical tool in this proof is a characterization of the $m$-th moments of $\mathbb{X}_n$, through the Schur-Weyl duality theory for the Clifford group. Concretely we obtain the following formula, which might be of independent interest:
\begin{align}\label{eq:m_th moment}
  \mathbb{E}(\mathbf{X}_n^m)
= (2^n+1)^m \sum_{k=0}^m \binom{m}{k} (-1)^{m-k} 2^{-n(m-k)} \prod_{\ell=0}^{k-1} \frac{2^\ell+1}{2^\ell + 2^n}.
\end{align}
One can see  that the moments~$\mathbb{E}(\mathbf{X}_n^m)$ of $\mathbf{X}_n$ grow fast with~$m$, and moreover increasingly so as~$n$ increases. This is key to the proof of \cref{thm:clifford_tails}
Note however that for fixed~$n$ the growth of the moments levels off when~$m\gg n$ (since~$\mathbf{X}_n$ is ultimately a bounded random variable).
Hence it is natural to discuss the behavior of the moments as we let~$n$ tend to infinity for fixed $m$.
In the case of the unitary group (as we saw in the proof of \cref{thm:unit_tails}), the limiting moments grow slowly enough with~$m$ to uniquely define a limiting random variable, with moments that are the limits of the moments of the random variables at finite~$n$.
However, this is not the case for the multi-qubit Clifford group. In the limit we have
\begin{equation*}
  \lim_{n\to\infty} \mathbb{E}(\mathbf{X}_n^m)
= \sum_{k=0}^m \binom{m}{k} (-1)^{m-k} \prod_{\ell=0}^{k-1}(2^\ell+1).
\end{equation*}
For~$m\geq 6$, the right-hand side this can be lower bounded as follows:
\begin{align*}
  \lim_{n\to\infty} \mathbb{E}(\mathbf{X}_n^m)
\geq 2^{\frac{m(m-1)}{2}}
= \Omega(2^{m^2/2})
\end{align*}
which shows that the moments grow super exponentially.
In fact they grow so fast that any random variable with those moments would have a moment generating function with convergence radius zero (and would hence be genuinely heavy tailed).
However the moments grow so fast it is not even clear whether the limiting moments determine a unique probability distribution.

\begin{acknowledgments}
\paragraph{Acknowledgments.}
We would like to thank Richard K\"ung for the discussion that kick-started this work.
We also thank Harold Nieuwboer and Freek Witteveen for useful conversations.
After a preprint of this article was posted, Ref.~\cite{zhou2022performance} reported on independent related results.
JH is partially supported by the Dutch Research Council (NWO) through the Gravitation grant Quantum Software Consortium, 024.003.037.
MW acknowledges support by the European Research Council~(ERC) through ERC Starting Grant 101040907-SYMOPTIC, the Deutsche Forschungsgemeinschaft (DFG, German Research Foundation) under Germany's Excellence Strategy - EXC\ 2092\ CASA - 390781972, the Federal Ministry of Education and Research (BMBF) through project Quantum Methods and Benchmarks for Resource Allocation (QuBRA), and NWO grant OCENW.KLEIN.267.
\end{acknowledgments}

%=============================================================================
\bibliography{variance}

%apsrev4-2.bst 2019-01-14 (MD) hand-edited version of apsrev4-1.bst
%Control: key (0)
%Control: author (8) initials jnrlst
%Control: editor formatted (1) identically to author
%Control: production of article title (0) allowed
%Control: page (0) single
%Control: year (1) truncated
%Control: production of eprint (0) enabled
\begin{thebibliography}{26}%
\makeatletter
\providecommand \@ifxundefined [1]{%
 \@ifx{#1\undefined}
}%
\providecommand \@ifnum [1]{%
 \ifnum #1\expandafter \@firstoftwo
 \else \expandafter \@secondoftwo
 \fi
}%
\providecommand \@ifx [1]{%
 \ifx #1\expandafter \@firstoftwo
 \else \expandafter \@secondoftwo
 \fi
}%
\providecommand \natexlab [1]{#1}%
\providecommand \enquote  [1]{``#1''}%
\providecommand \bibnamefont  [1]{#1}%
\providecommand \bibfnamefont [1]{#1}%
\providecommand \citenamefont [1]{#1}%
\providecommand \href@noop [0]{\@secondoftwo}%
\providecommand \href [0]{\begingroup \@sanitize@url \@href}%
\providecommand \@href[1]{\@@startlink{#1}\@@href}%
\providecommand \@@href[1]{\endgroup#1\@@endlink}%
\providecommand \@sanitize@url [0]{\catcode `\\12\catcode `\$12\catcode
  `\&12\catcode `\#12\catcode `\^12\catcode `\_12\catcode `\%12\relax}%
\providecommand \@@startlink[1]{}%
\providecommand \@@endlink[0]{}%
\providecommand \url  [0]{\begingroup\@sanitize@url \@url }%
\providecommand \@url [1]{\endgroup\@href {#1}{\urlprefix }}%
\providecommand \urlprefix  [0]{URL }%
\providecommand \Eprint [0]{\href }%
\providecommand \doibase [0]{https://doi.org/}%
\providecommand \selectlanguage [0]{\@gobble}%
\providecommand \bibinfo  [0]{\@secondoftwo}%
\providecommand \bibfield  [0]{\@secondoftwo}%
\providecommand \translation [1]{[#1]}%
\providecommand \BibitemOpen [0]{}%
\providecommand \bibitemStop [0]{}%
\providecommand \bibitemNoStop [0]{.\EOS\space}%
\providecommand \EOS [0]{\spacefactor3000\relax}%
\providecommand \BibitemShut  [1]{\csname bibitem#1\endcsname}%
\let\auto@bib@innerbib\@empty
%</preamble>
\bibitem [{\citenamefont {Magesan}\ \emph {et~al.}(2011)\citenamefont
  {Magesan}, \citenamefont {Gambetta},\ and\ \citenamefont
  {Emerson}}]{magesan2011scalable}%
  \BibitemOpen
  \bibfield  {author} {\bibinfo {author} {\bibfnamefont {E.}~\bibnamefont
  {Magesan}}, \bibinfo {author} {\bibfnamefont {J.~M.}\ \bibnamefont
  {Gambetta}},\ and\ \bibinfo {author} {\bibfnamefont {J.}~\bibnamefont
  {Emerson}},\ }\bibfield  {title} {\bibinfo {title} {Scalable and robust
  randomized benchmarking of quantum processes},\ }\href
  {https://doi.org/10.1103/PhysRevLett.106.180504} {\bibfield  {journal}
  {\bibinfo  {journal} {Physical Review Letters}\ }\textbf {\bibinfo {volume}
  {106}},\ \bibinfo {pages} {180504} (\bibinfo {year} {2011})}\BibitemShut
  {NoStop}%
\bibitem [{\citenamefont {Helsen}\ \emph {et~al.}(2022)\citenamefont {Helsen},
  \citenamefont {Roth}, \citenamefont {Onorati}, \citenamefont {Werner},\ and\
  \citenamefont {Eisert}}]{helsen2020general}%
  \BibitemOpen
  \bibfield  {author} {\bibinfo {author} {\bibfnamefont {J.}~\bibnamefont
  {Helsen}}, \bibinfo {author} {\bibfnamefont {I.}~\bibnamefont {Roth}},
  \bibinfo {author} {\bibfnamefont {E.}~\bibnamefont {Onorati}}, \bibinfo
  {author} {\bibfnamefont {A.~H.}\ \bibnamefont {Werner}},\ and\ \bibinfo
  {author} {\bibfnamefont {J.}~\bibnamefont {Eisert}},\ }\bibfield  {title}
  {\bibinfo {title} {General framework for randomized benchmarking},\ }\href
  {https://doi.org/10.1103/PRXQuantum.3.020357} {\bibfield  {journal} {\bibinfo
   {journal} {PRX Quantum}\ }\textbf {\bibinfo {volume} {3}},\ \bibinfo {pages}
  {020357} (\bibinfo {year} {2022})}\BibitemShut {NoStop}%
\bibitem [{\citenamefont {Eisert}\ \emph {et~al.}(2020)\citenamefont {Eisert},
  \citenamefont {Hangleiter}, \citenamefont {Walk}, \citenamefont {Roth},
  \citenamefont {Markham}, \citenamefont {Parekh}, \citenamefont {Chabaud},\
  and\ \citenamefont {Kashefi}}]{eisert2020quantum}%
  \BibitemOpen
  \bibfield  {author} {\bibinfo {author} {\bibfnamefont {J.}~\bibnamefont
  {Eisert}}, \bibinfo {author} {\bibfnamefont {D.}~\bibnamefont {Hangleiter}},
  \bibinfo {author} {\bibfnamefont {N.}~\bibnamefont {Walk}}, \bibinfo {author}
  {\bibfnamefont {I.}~\bibnamefont {Roth}}, \bibinfo {author} {\bibfnamefont
  {D.}~\bibnamefont {Markham}}, \bibinfo {author} {\bibfnamefont
  {R.}~\bibnamefont {Parekh}}, \bibinfo {author} {\bibfnamefont
  {U.}~\bibnamefont {Chabaud}},\ and\ \bibinfo {author} {\bibfnamefont
  {E.}~\bibnamefont {Kashefi}},\ }\bibfield  {title} {\bibinfo {title} {Quantum
  certification and benchmarking},\ }\href
  {https://doi.org/10.1038/s42254-020-0186-4} {\bibfield  {journal} {\bibinfo
  {journal} {Nature Reviews Physics}\ }\textbf {\bibinfo {volume} {2}},\
  \bibinfo {pages} {382} (\bibinfo {year} {2020})}\BibitemShut {NoStop}%
\bibitem [{\citenamefont {Huang}\ \emph {et~al.}(2020)\citenamefont {Huang},
  \citenamefont {Kueng},\ and\ \citenamefont {Preskill}}]{huang2020predicting}%
  \BibitemOpen
  \bibfield  {author} {\bibinfo {author} {\bibfnamefont {H.-Y.}\ \bibnamefont
  {Huang}}, \bibinfo {author} {\bibfnamefont {R.}~\bibnamefont {Kueng}},\ and\
  \bibinfo {author} {\bibfnamefont {J.}~\bibnamefont {Preskill}},\ }\bibfield
  {title} {\bibinfo {title} {Predicting many properties of a quantum system
  from very few measurements},\ }\href
  {https://doi.org/10.1038/s41567-020-0932-7} {\bibfield  {journal} {\bibinfo
  {journal} {Nature Physics}\ }\textbf {\bibinfo {volume} {16}},\ \bibinfo
  {pages} {1050} (\bibinfo {year} {2020})}\BibitemShut {NoStop}%
\bibitem [{Note1()}]{Note1}%
  \BibitemOpen
  \bibinfo {note} {The name being derived from a more theoretical proposal due
  to Aaronson~\cite {aaronson2018shadow}.}\BibitemShut {Stop}%
\bibitem [{\citenamefont {Zhao}\ \emph {et~al.}(2021)\citenamefont {Zhao},
  \citenamefont {Rubin},\ and\ \citenamefont {Miyake}}]{zhao2020fermionic}%
  \BibitemOpen
  \bibfield  {author} {\bibinfo {author} {\bibfnamefont {A.}~\bibnamefont
  {Zhao}}, \bibinfo {author} {\bibfnamefont {N.~C.}\ \bibnamefont {Rubin}},\
  and\ \bibinfo {author} {\bibfnamefont {A.}~\bibnamefont {Miyake}},\
  }\bibfield  {title} {\bibinfo {title} {Fermionic partial tomography via
  classical shadows},\ }\href {https://doi.org/10.1103/PhysRevLett.127.110504}
  {\bibfield  {journal} {\bibinfo  {journal} {Physical Review Letters}\
  }\textbf {\bibinfo {volume} {127}},\ \bibinfo {pages} {110504} (\bibinfo
  {year} {2021})}\BibitemShut {NoStop}%
\bibitem [{\citenamefont {Hu}\ \emph {et~al.}(2021)\citenamefont {Hu},
  \citenamefont {Choi},\ and\ \citenamefont {You}}]{hu2021classical}%
  \BibitemOpen
  \bibfield  {author} {\bibinfo {author} {\bibfnamefont {H.-Y.}\ \bibnamefont
  {Hu}}, \bibinfo {author} {\bibfnamefont {S.}~\bibnamefont {Choi}},\ and\
  \bibinfo {author} {\bibfnamefont {Y.-Z.}\ \bibnamefont {You}},\ }\bibfield
  {title} {\bibinfo {title} {Classical shadow tomography with locally scrambled
  quantum dynamics},\ }\href@noop {} {\bibfield  {journal} {\bibinfo  {journal}
  {arXiv preprint arXiv:2107.04817}\ } (\bibinfo {year} {2021})},\ \Eprint
  {https://arxiv.org/abs/2107.04817} {2107.04817} \BibitemShut {NoStop}%
\bibitem [{\citenamefont {Huggins}\ \emph {et~al.}(2022)\citenamefont
  {Huggins}, \citenamefont {O'Gorman}, \citenamefont {Rubin}, \citenamefont
  {Reichman}, \citenamefont {Babbush},\ and\ \citenamefont
  {Lee}}]{huggins2022unbiasing}%
  \BibitemOpen
  \bibfield  {author} {\bibinfo {author} {\bibfnamefont {W.~J.}\ \bibnamefont
  {Huggins}}, \bibinfo {author} {\bibfnamefont {B.~A.}\ \bibnamefont
  {O'Gorman}}, \bibinfo {author} {\bibfnamefont {N.~C.}\ \bibnamefont {Rubin}},
  \bibinfo {author} {\bibfnamefont {D.~R.}\ \bibnamefont {Reichman}}, \bibinfo
  {author} {\bibfnamefont {R.}~\bibnamefont {Babbush}},\ and\ \bibinfo {author}
  {\bibfnamefont {J.}~\bibnamefont {Lee}},\ }\bibfield  {title} {\bibinfo
  {title} {Unbiasing fermionic quantum {M}onte {C}arlo with a quantum
  computer},\ }\href {https://doi.org/10.1038/s41586-021-04351-z} {\bibfield
  {journal} {\bibinfo  {journal} {Nature}\ }\textbf {\bibinfo {volume} {603}},\
  \bibinfo {pages} {416} (\bibinfo {year} {2022})}\BibitemShut {NoStop}%
\bibitem [{\citenamefont {Struchalin}\ \emph {et~al.}(2021)\citenamefont
  {Struchalin}, \citenamefont {Zagorovskii}, \citenamefont {Kovlakov},
  \citenamefont {Straupe},\ and\ \citenamefont
  {Kulik}}]{struchalin2021experimental}%
  \BibitemOpen
  \bibfield  {author} {\bibinfo {author} {\bibfnamefont {G.}~\bibnamefont
  {Struchalin}}, \bibinfo {author} {\bibfnamefont {Y.~A.}\ \bibnamefont
  {Zagorovskii}}, \bibinfo {author} {\bibfnamefont {E.}~\bibnamefont
  {Kovlakov}}, \bibinfo {author} {\bibfnamefont {S.}~\bibnamefont {Straupe}},\
  and\ \bibinfo {author} {\bibfnamefont {S.}~\bibnamefont {Kulik}},\ }\bibfield
   {title} {\bibinfo {title} {Experimental estimation of quantum state
  properties from classical shadows},\ }\href
  {https://doi.org/10.1103/PRXQuantum.2.010307} {\bibfield  {journal} {\bibinfo
   {journal} {PRX Quantum}\ }\textbf {\bibinfo {volume} {2}},\ \bibinfo {pages}
  {010307} (\bibinfo {year} {2021})}\BibitemShut {NoStop}%
\bibitem [{\citenamefont {Elben}\ \emph {et~al.}(2022)\citenamefont {Elben},
  \citenamefont {Flammia}, \citenamefont {Huang}, \citenamefont {Kueng},
  \citenamefont {Preskill}, \citenamefont {Vermersch},\ and\ \citenamefont
  {Zoller}}]{elben2022randomized}%
  \BibitemOpen
  \bibfield  {author} {\bibinfo {author} {\bibfnamefont {A.}~\bibnamefont
  {Elben}}, \bibinfo {author} {\bibfnamefont {S.~T.}\ \bibnamefont {Flammia}},
  \bibinfo {author} {\bibfnamefont {H.-Y.}\ \bibnamefont {Huang}}, \bibinfo
  {author} {\bibfnamefont {R.}~\bibnamefont {Kueng}}, \bibinfo {author}
  {\bibfnamefont {J.}~\bibnamefont {Preskill}}, \bibinfo {author}
  {\bibfnamefont {B.}~\bibnamefont {Vermersch}},\ and\ \bibinfo {author}
  {\bibfnamefont {P.}~\bibnamefont {Zoller}},\ }\bibfield  {title} {\bibinfo
  {title} {The randomized measurement toolbox},\ }\bibfield  {journal}
  {\bibinfo  {journal} {Nature Reviews Physics}\ }\href
  {https://doi.org/10.1038/s42254-022-00535-2} {10.1038/s42254-022-00535-2}
  (\bibinfo {year} {2022})\BibitemShut {NoStop}%
\bibitem [{\citenamefont {Aaronson}\ and\ \citenamefont
  {Gottesman}(2004)}]{aaronson2004improved}%
  \BibitemOpen
  \bibfield  {author} {\bibinfo {author} {\bibfnamefont {S.}~\bibnamefont
  {Aaronson}}\ and\ \bibinfo {author} {\bibfnamefont {D.}~\bibnamefont
  {Gottesman}},\ }\bibfield  {title} {\bibinfo {title} {Improved simulation of
  stabilizer circuits},\ }\href {https://doi.org/10.1103/PhysRevA.70.052328}
  {\bibfield  {journal} {\bibinfo  {journal} {Physical Review A}\ }\textbf
  {\bibinfo {volume} {70}},\ \bibinfo {pages} {052328} (\bibinfo {year}
  {2004})}\BibitemShut {NoStop}%
\bibitem [{\citenamefont {Wallman}\ and\ \citenamefont
  {Flammia}(2014)}]{wallman2014randomized}%
  \BibitemOpen
  \bibfield  {author} {\bibinfo {author} {\bibfnamefont {J.~J.}\ \bibnamefont
  {Wallman}}\ and\ \bibinfo {author} {\bibfnamefont {S.~T.}\ \bibnamefont
  {Flammia}},\ }\bibfield  {title} {\bibinfo {title} {Randomized benchmarking
  with confidence},\ }\href {https://doi.org/10.1088/1367-2630/16/10/103032}
  {\bibfield  {journal} {\bibinfo  {journal} {New Journal of Physics}\ }\textbf
  {\bibinfo {volume} {16}},\ \bibinfo {pages} {103032} (\bibinfo {year}
  {2014})}\BibitemShut {NoStop}%
\bibitem [{\citenamefont {Helsen}\ \emph {et~al.}(2019)\citenamefont {Helsen},
  \citenamefont {Wallman}, \citenamefont {Flammia},\ and\ \citenamefont
  {Wehner}}]{helsen2019multiqubit}%
  \BibitemOpen
  \bibfield  {author} {\bibinfo {author} {\bibfnamefont {J.}~\bibnamefont
  {Helsen}}, \bibinfo {author} {\bibfnamefont {J.~J.}\ \bibnamefont {Wallman}},
  \bibinfo {author} {\bibfnamefont {S.~T.}\ \bibnamefont {Flammia}},\ and\
  \bibinfo {author} {\bibfnamefont {S.}~\bibnamefont {Wehner}},\ }\bibfield
  {title} {\bibinfo {title} {Multiqubit randomized benchmarking using few
  samples},\ }\href {https://doi.org/10.1103/PhysRevA.100.032304} {\bibfield
  {journal} {\bibinfo  {journal} {Physical Review A}\ }\textbf {\bibinfo
  {volume} {100}},\ \bibinfo {pages} {032304} (\bibinfo {year}
  {2019})}\BibitemShut {NoStop}%
\bibitem [{Note2()}]{Note2}%
  \BibitemOpen
  \bibinfo {note} {The quantum channel~$\protect \mathcal F$ is also called a
  \protect \emph {frame operator}. While it need not be invertible in general,
  it is invertible for all circuit sets considered in this Letter (as well as
  those considered in all other shadow tomography protocols that we are aware
  of).}\BibitemShut {Stop}%
\bibitem [{\citenamefont {Lugosi}\ and\ \citenamefont
  {Mendelson}(2019)}]{lugosi2019mean}%
  \BibitemOpen
  \bibfield  {author} {\bibinfo {author} {\bibfnamefont {G.}~\bibnamefont
  {Lugosi}}\ and\ \bibinfo {author} {\bibfnamefont {S.}~\bibnamefont
  {Mendelson}},\ }\bibfield  {title} {\bibinfo {title} {Mean estimation and
  regression under heavy-tailed distributions: A survey},\ }\href
  {https://doi.org/10.1007/s10208-019-09427-x} {\bibfield  {journal} {\bibinfo
  {journal} {Foundations of Computational Mathematics}\ }\textbf {\bibinfo
  {volume} {19}},\ \bibinfo {pages} {1145} (\bibinfo {year}
  {2019})}\BibitemShut {NoStop}%
\bibitem [{\citenamefont {Gross}\ \emph {et~al.}(2021)\citenamefont {Gross},
  \citenamefont {Nezami},\ and\ \citenamefont {Walter}}]{gross2021schur}%
  \BibitemOpen
  \bibfield  {author} {\bibinfo {author} {\bibfnamefont {D.}~\bibnamefont
  {Gross}}, \bibinfo {author} {\bibfnamefont {S.}~\bibnamefont {Nezami}},\ and\
  \bibinfo {author} {\bibfnamefont {M.}~\bibnamefont {Walter}},\ }\bibfield
  {title} {\bibinfo {title} {{S}chur-{W}eyl duality for the {C}lifford group
  with applications: Property testing, a robust {H}udson theorem, and {de
  Finetti} representations},\ }\href
  {https://doi.org/10.1007/s00220-021-04118-7} {\bibfield  {journal} {\bibinfo
  {journal} {Communications in Mathematical Physics}\ ,\ \bibinfo {pages} {1}}
  (\bibinfo {year} {2021})}\BibitemShut {NoStop}%
\bibitem [{\citenamefont {Haferkamp}\ \emph {et~al.}(2020)\citenamefont
  {Haferkamp}, \citenamefont {Montealegre-Mora}, \citenamefont {Heinrich},
  \citenamefont {Eisert}, \citenamefont {Gross},\ and\ \citenamefont
  {Roth}}]{haferkamp2020quantum}%
  \BibitemOpen
  \bibfield  {author} {\bibinfo {author} {\bibfnamefont {J.}~\bibnamefont
  {Haferkamp}}, \bibinfo {author} {\bibfnamefont {F.}~\bibnamefont
  {Montealegre-Mora}}, \bibinfo {author} {\bibfnamefont {M.}~\bibnamefont
  {Heinrich}}, \bibinfo {author} {\bibfnamefont {J.}~\bibnamefont {Eisert}},
  \bibinfo {author} {\bibfnamefont {D.}~\bibnamefont {Gross}},\ and\ \bibinfo
  {author} {\bibfnamefont {I.}~\bibnamefont {Roth}},\ }\bibfield  {title}
  {\bibinfo {title} {Quantum homeopathy works: Efficient unitary designs with a
  system-size independent number of non-clifford gates},\ }\href@noop {}
  {\bibfield  {journal} {\bibinfo  {journal} {arXiv preprint arXiv:2002.09524}\
  } (\bibinfo {year} {2020})},\ \Eprint {https://arxiv.org/abs/2002.09524}
  {2002.09524} \BibitemShut {NoStop}%
\bibitem [{\citenamefont {Qassim}\ \emph {et~al.}(2021)\citenamefont {Qassim},
  \citenamefont {Pashayan},\ and\ \citenamefont {Gosset}}]{qassim2021improved}%
  \BibitemOpen
  \bibfield  {author} {\bibinfo {author} {\bibfnamefont {H.}~\bibnamefont
  {Qassim}}, \bibinfo {author} {\bibfnamefont {H.}~\bibnamefont {Pashayan}},\
  and\ \bibinfo {author} {\bibfnamefont {D.}~\bibnamefont {Gosset}},\
  }\bibfield  {title} {\bibinfo {title} {Improved upper bounds on the
  stabilizer rank of magic states},\ }\href
  {https://doi.org/10.22331/q-2021-12-20-606} {\bibfield  {journal} {\bibinfo
  {journal} {Quantum}\ }\textbf {\bibinfo {volume} {5}},\ \bibinfo {pages}
  {606} (\bibinfo {year} {2021})}\BibitemShut {NoStop}%
\bibitem [{Note3()}]{Note3}%
  \BibitemOpen
  \bibinfo {note} {This is a subject of active research, and thus this scaling
  might improve further}\BibitemShut {NoStop}%
\bibitem [{\citenamefont {Zhou}\ and\ \citenamefont
  {Liu}(2022)}]{zhou2022performance}%
  \BibitemOpen
  \bibfield  {author} {\bibinfo {author} {\bibfnamefont {Y.}~\bibnamefont
  {Zhou}}\ and\ \bibinfo {author} {\bibfnamefont {Q.}~\bibnamefont {Liu}},\
  }\bibfield  {title} {\bibinfo {title} {Performance analysis of multi-shot
  shadow estimation},\ }\href@noop {} {\bibfield  {journal} {\bibinfo
  {journal} {arXiv preprint arXiv:2212.11068}\ } (\bibinfo {year}
  {2022})}\BibitemShut {NoStop}%
\bibitem [{\citenamefont {Aaronson}(2018)}]{aaronson2018shadow}%
  \BibitemOpen
  \bibfield  {author} {\bibinfo {author} {\bibfnamefont {S.}~\bibnamefont
  {Aaronson}},\ }\bibfield  {title} {\bibinfo {title} {Shadow tomography of
  quantum states},\ }in\ \href {https://doi.org/10.1145/3188745.3188802} {\emph
  {\bibinfo {booktitle} {Proceedings of the 50th Annual ACM SIGACT Symposium on
  Theory of Computing (STOC)}}}\ (\bibinfo {year} {2018})\ pp.\ \bibinfo
  {pages} {325--338}\BibitemShut {NoStop}%
\bibitem [{\citenamefont {Collins}\ \emph {et~al.}(2022)\citenamefont
  {Collins}, \citenamefont {Matsumoto},\ and\ \citenamefont
  {Novak}}]{collins2021weingarten}%
  \BibitemOpen
  \bibfield  {author} {\bibinfo {author} {\bibfnamefont {B.}~\bibnamefont
  {Collins}}, \bibinfo {author} {\bibfnamefont {S.}~\bibnamefont {Matsumoto}},\
  and\ \bibinfo {author} {\bibfnamefont {J.}~\bibnamefont {Novak}},\ }\bibfield
   {title} {\bibinfo {title} {The {W}eingarten calculus},\ }\href
  {https://doi.org/10.1090/noti2474} {\bibfield  {journal} {\bibinfo  {journal}
  {Notices of the AMS}\ }\textbf {\bibinfo {volume} {69}},\ \bibinfo {pages}
  {734} (\bibinfo {year} {2022})}\BibitemShut {NoStop}%
\bibitem [{\citenamefont {Zhu}\ \emph {et~al.}(2016)\citenamefont {Zhu},
  \citenamefont {Kueng}, \citenamefont {Grassl},\ and\ \citenamefont
  {Gross}}]{zhu2016clifford}%
  \BibitemOpen
  \bibfield  {author} {\bibinfo {author} {\bibfnamefont {H.}~\bibnamefont
  {Zhu}}, \bibinfo {author} {\bibfnamefont {R.}~\bibnamefont {Kueng}}, \bibinfo
  {author} {\bibfnamefont {M.}~\bibnamefont {Grassl}},\ and\ \bibinfo {author}
  {\bibfnamefont {D.}~\bibnamefont {Gross}},\ }\bibfield  {title} {\bibinfo
  {title} {The {C}lifford group fails gracefully to be a unitary 4-design},\
  }\href@noop {} {\bibfield  {journal} {\bibinfo  {journal} {arXiv preprint
  arXiv:1609.08172}\ } (\bibinfo {year} {2016})},\ \Eprint
  {https://arxiv.org/abs/1609.08172} {1609.08172} \BibitemShut {NoStop}%
\bibitem [{\citenamefont {Helsen}\ \emph {et~al.}(2018)\citenamefont {Helsen},
  \citenamefont {Wallman},\ and\ \citenamefont
  {Wehner}}]{helsen2018representations}%
  \BibitemOpen
  \bibfield  {author} {\bibinfo {author} {\bibfnamefont {J.}~\bibnamefont
  {Helsen}}, \bibinfo {author} {\bibfnamefont {J.~J.}\ \bibnamefont
  {Wallman}},\ and\ \bibinfo {author} {\bibfnamefont {S.}~\bibnamefont
  {Wehner}},\ }\bibfield  {title} {\bibinfo {title} {Representations of the
  multi-qubit {C}lifford group},\ }\href {https://doi.org/10.1063/1.4997688}
  {\bibfield  {journal} {\bibinfo  {journal} {Journal of Mathematical Physics}\
  }\textbf {\bibinfo {volume} {59}},\ \bibinfo {pages} {072201} (\bibinfo
  {year} {2018})}\BibitemShut {NoStop}%
\bibitem [{\citenamefont {Montealegre-Mora}\ and\ \citenamefont
  {Gross}(2021)}]{montealegre2021rank}%
  \BibitemOpen
  \bibfield  {author} {\bibinfo {author} {\bibfnamefont {F.}~\bibnamefont
  {Montealegre-Mora}}\ and\ \bibinfo {author} {\bibfnamefont {D.}~\bibnamefont
  {Gross}},\ }\bibfield  {title} {\bibinfo {title} {Rank-deficient
  representations in the theta correspondence over finite fields arise from
  quantum codes},\ }\href {https://doi.org/https://doi.org/10.1090/ert/563}
  {\bibfield  {journal} {\bibinfo  {journal} {Representation Theory}\ }\textbf
  {\bibinfo {volume} {25}},\ \bibinfo {pages} {193} (\bibinfo {year}
  {2021})}\BibitemShut {NoStop}%
\bibitem [{Note4()}]{Note4}%
  \BibitemOpen
  \bibinfo {note} {This follows from \cite [Example~4.27 and
  Theorem~4.24]{gross2021schur}. Observe that the stabilizer of the left
  multiplication action of $S_4$ on $R_{T_4}$ is the Klein four-group~$K_4 =
  \langle (1 2)(3 4), (1 3)(2 4) \rangle $, with quotient $S_4 / K_4$ is
  isomorphic to $S_3$.}\BibitemShut {Stop}%
\end{thebibliography}%
%=============================================================================

%=============================================================================
\appendix
\begin{widetext}
%=============================================================================
\begin{center}
\Large \textsc{Supplemental Material}
\end{center}

\bigskip

In this supplement we provide detailed proofs of several statements made in the Letter.
We begin by collating several useful statements on the behavior of the unitary and multiqubit Clifford groups.

%=============================================================================
\section{Moments of the Unitary and the Clifford group}\label{sec:clifford_theory}
%=============================================================================
In this section we recall known results that allow for the computation of $t$-th moments of random unitaries, first in the very well-known setting of the unitary group and then for the Clifford group, where we draw on very recent results from the literature.

%-----------------------------------------------------------------------------
\subsection{Unitary group}\label{subsec:unitary}
%-----------------------------------------------------------------------------
We begin with a brief discussion of the moments of the Haar measure on the unitary group on~$n$ qubits.
These are captured collectively by the \emph{$t$-th moment (super)operator}, which is defined by
\begin{equation}\label{eq:moment op unitary}
  \mathcal M^{U(2^n),(t)} = \int_{U(2^n)} dU \,\mc{U}^{\ot t}.
\end{equation}
By standard arguments, $\mathcal M^{U(2^n),(t)}$ is the orthogonal projection onto the commutant of the $t$-fold tensor power action, that is, onto the linear space of operators on $((\CC^2)^{\ot n})^{\ot t}$ that commute with~$U^{\ot t}$ or, equivalently, are left invariant by~$\mathcal U^{\ot t}$ for any unitary~$U\in U(2^n)$.
By Schur-Weyl duality, this commutant is spanned by the natural action of the permutation group~$S_t$ on~$((\CC^2)^{\ot n})^{\ot t}$, that is, by the operators~$R_\pi$ for $\pi\in S_t$ permuting the $t$ copies of the $n$-qubit Hilbert space according to~$\pi$, i.e.,
\begin{align*}
  R_\pi = r_\pi^{\ot n}
\quad\text{where}\quad
  r_{\pi} = \sum_{x \in \{0,1\}^t} \ket{x_{\pi(1)},\ldots, x_{\pi(t)}} \bra{x_1, \ldots x_t}.
\end{align*}
For $n\geq t-1$, these operators are linearly independent and hence a basis, but not pairwise orthogonal.
However, for large~$n$ they are approximately orthogonal in the following sense.
Consider the \emph{Gram matrix}~$G^{U(2^n),(t)}$ associated with the basis~$\kett{R_\pi}$, which has as its entries
\begin{align*}
  G^{U(2^n),(t)}_{\pi,\pi'}
= \braakett{R_\pi}{R_{\pi'}}
= \braakett{r_\pi}{r_{\pi'}}^n
= 2^{c(\pi^{-1} \pi') n}
= \begin{cases}
2^{tn} & \text{ if } \pi = \pi', \\
\leq 2^{(t-1)n} & \text{ if } \pi \neq \pi'.
\end{cases}
\end{align*}
where $c(\tau)$ is the number of cycles in a permutation~$\tau\in S_t$.
Thus,
\begin{align}\label{eq:gram_pert U}
  G^{U(2^n),(t)} = 2^{tn} \parens*{ I + 2^{-n} E },
\end{align}
where~$E$ is a matrix with entries bounded by one.
It is in this sense that permutations become approximately orthogonal for a large number of qubits~$n$.

When the Gram matrix is invertible (as mentioned this is the case for $n\geq t-1$) then we call its inverse the \emph{Weingarten matrix}, denoted~$W^{U(2^n),(t)}$.
Then we can write the moment operator~\eqref{eq:moment op unitary} as
\begin{align}\label{eq:moment via weingarten unitary}
  \mathcal{M}_t = \sum_{\pi',\pi \in S_t} W^{U(2^n), (t)}_{\pi',\pi} \kett{R_{\pi'}}\braa{R_\pi}.
\end{align}
For large~$n$, one can show that \cref{eq:gram_pert U} implies that, for any fixed~$t$,
\begin{align}\label{eq:weingarten bound unitary}
  W^{U(2^n),(t)} = 2^{-tn} \parens*{ I + 2^{-n} F },
\end{align}
where~$F$ is a matrix with bounded entries.
We will not prove \cref{eq:weingarten bound unitary}, but note that it can be derived in the same fashion as the corresponding statement for the Clifford group, see \cref{eq:weingarten bound,eq:weingarten bound 2}, which we prove below.
The above is the starting point for an extensive theory known as \emph{Weingarten calculus}~\cite{collins2021weingarten}.

While in general the moment operator is nontrivial to compute with for finite~$n$, its action on the $t$-th tensor power of an $n$-qubit pure state~$\ket\psi$ is simple to compute and does not depend on the choice of state:
\begin{equation}\label{eq:state_av_haar}
 \mathcal M^{U(2^n),(t)}\kett{\psi}\tn{t}
= \int_{U(2^n)} dU \, \mc{U}^{\ot t} \kett{\psi}\tn t
= \int_{\mathbb{CP}^{2^n-1}} d\phi \, \kett{\phi}\tn{t}
= \parens*{ \prod_{\ell=0}^{t-1} \frac 1 {2^n+\ell} } \sum_{\pi\in S_t} \kett{R_\pi},
\end{equation}
where $\mathbb{CP}^{2^n-1}$ is the complex projective space of dimension $2^n-1$ (i.e., the space of quantum states).
This concludes our brief review of the moments of the unitary group.

%-----------------------------------------------------------------------------
\subsection{Clifford group}\label{subsec:cliff}
%-----------------------------------------------------------------------------
We now continue with the Clifford group~$\mathbb C_n$.
We will see that recent results in its representation theory imply a remarkably similar structure as what we discussed above for the unitary group.
We start by defining the \emph{$t$-moment operator} of the Clifford group, which similarly captures all $t$-th moments of random Clifford unitaries:
\begin{equation}\label{eq:moment op cliff}
  \mathcal M^{\mathbb C_n,(t)} = \frac{1}{\abs{\mathbb{C}_n}} \sum_{C \in\mathbb{C}_n} \mc{C}^{\ot t}.
\end{equation}
Similarly, $\mathcal M^{\mathbb C_n,(t)}$ is the orthogonal projection onto the commutant of the $t$-fold tensor power action of the Clifford group~$\mathbb C_n$, that is, onto the linear space of operators on $((\CC^2)^{\ot n})^{\ot t}$ that commute with~$C^{\ot t}$ or, equivalently, are left invariant by~$\mathcal C^{\ot t}$ for any Clifford unitary~$C\in\mathbb{C}_n$.

Because the Clifford group is a $3$-design, for $t\leq3$ this commutant coincides with that of the unitary group and is thus spanned by the action of the permutation group~$S_t$ (as above).
In particular, $\mathcal M^{\mathbb C_n,(t)} = \mathcal M^{U(2^n),(t)}$ for $t\leq 3$.
For~$t\geq4$, however, the commutant is more complicated, as the Clifford group is \emph{not} a $4$-design.
However, a series of works have started to uncover its structure~\cite{zhu2016clifford,helsen2018representations,gross2021schur,montealegre2021rank}.
In particular, the commutant has recently been characterized in full generality~\cite[Theorem~4.3]{gross2021schur}.
For~$n\geq t-1$, its dimension only depends on~$t$, and a basis is given by~$\kett{R_T}$ for $T\in\Sigma_{t,t}$, where $\Sigma_{t,t}$ is a certain set of $t$-dimensional linear subspaces $T \subseteq \mathbb{F}_2^{2t}$ and where, analogously to the above, the operators $R_T$ are defined by
\begin{align}\label{eq:def r_T}
  R_T = r_T^{\ot n}
\quad\text{where}\quad
  r_T = \sum_{(x,y) \in T} \ket{x_1,\dots,x_t}\bra{y_1,\dots,y_t}.
\end{align}
The set $\Sigma_{t,t}$ includes the subspaces $T_\pi = \{(\pi x, x) : x \in \mathbb{Z}_2^t\}$ associated with permutations~$\pi \in S_t$, with $R_{T_\pi} = R_\pi$.
By identifying $\pi$ with $T_\pi$, we shall thus think of $S_t \subseteq \Sigma_{t,t}$ as a subset.
For $t\leq 3$, $\Sigma_{t,t} = S_t$, but if~$t\geq 4$ there are other subspaces that do \emph{not} arise from permutations.

In the following we will mainly be interested in the case~$t=4$.
The set~$\Sigma_{4,4}$ consists of 30 subspaces and hence the commutant of the 4-th tensor power action of the Clifford group has dimension~$30$ for~$n\geq 3$.
It is spanned by 30 operators~$R_T$, which are described explicitly in~\cite[Example~4.27]{gross2021schur}.
Apart from the~24 permutation operators~$R_\pi$, parameterized by~$\pi \in S_4$, there are six more operators~$R_T$.
The latter can be written in the form~$R_\pi \Pi_4$, where $\pi$ ranges over the subgroup~$S_3 \subseteq S_4$ and
\begin{align}\label{eq:Pi_4}
  \Pi_4 := R_{T_4} = 2^{-n} \sum_{P \in \mathbb{P}_n} P^{\ot 4},
\end{align}
where $\mathbb{P}_n$ is the set of $n$-qubit Pauli operators and $T_4$ denotes a certain element in~$\Sigma_{4,4}$ whose details are not important here~%
\footnote{This follows from \cite[Example~4.27 and Theorem~4.24]{gross2021schur}.
Observe that the stabilizer of the left multiplication action of $S_4$ on $R_{T_4}$ is the Klein four-group~$K_4 = \langle (1 2)(3 4), (1 3)(2 4) \rangle$, with quotient $S_4 / K_4$ is isomorphic to $S_3$.}.
We denote by $\hat S_3 = \{ \pi T_4 \}_{\pi \in S_3}$ the set of subspaces corresponding to the six operators~$R_\pi \Pi_4$ for~$\pi\in S_3$.
Then, $\Sigma_{4,4} = S_4\cup \hat{S}_3$.

Next we consider the \emph{Gram matrix}~$G^{\mathbb C_n,(t)}$ associated with the basis~$\kett{R_T}$ for $T \in \Sigma_{t,t}$, which has as its entries
\begin{align*}
  G^{\mathbb C_n,(t)}_{T,T'}
= \braakett{R_T}{R_{T'}}
= \braakett{r_T}{r_{T'}}^n
= |T \cap T'|^n
= 2^{\dim(T \cap T') n}
= \begin{cases}
2^{tn} & \text{ if } T = T', \\
\leq 2^{(t-1)n} & \text{ if } T \neq T',
\end{cases}
\end{align*}
which follows from \cref{eq:def r_T}.
Accordingly, in complete analogy to \cref{eq:gram_pert U} we have
\begin{align}\label{eq:gram_pert cliff}
  G^{\mathbb C_n,(t)} = 2^{tn} \left( I + 2^{-n} E \right),
\end{align}
where $E$ is a matrix with entries bounded by one.
This also confirms the linear independence of the operators $R_T$ for large~$n$.
As mentioned above, it is known that $n\geq t-1$ suffices for linear independence, so $G^{\mathbb C_n,(t)}$ is invertible as soon as $n \geq t-1$.
We call the inverse of the Gram matrix the \emph{Clifford-Weingarten matrix}~$W^{\mathbb C_n,(t)}$.
Then the moment operator~\eqref{eq:moment op cliff} is given by
\begin{align}\label{eq:moment via weingarten clifford}
  \mathcal{M}^{\mathbb C_n,(t)} = \sum_{T',T \in \Sigma_{t,t}} W^{\mathbb C_n,(t)}_{T',T} \kett{R_{T'}}\braa{R_T}.
\end{align}

The Clifford-Weingarten matrix is nontrivial to compute with, but the situation simplifies significantly if it is applied to a tensor product of pure stabilizer density matrices $\kett{S}\tn{t}$  In this case we have (see \cite[Theorem~5.3]{gross2021schur})
\begin{equation}\label{eq:state_av_cliff}
  \mathcal M^{\mathbb C_n,(t)}\kett{S}\tn{t}
= \frac 1{2^n \prod_{\ell=0}^{t-2} (2^n + 2^\ell)} \sum_{T\in\Sigma_{t,t}} \kett{R_T}.
\end{equation}

For $t\leq 3$, this formula coincides with \cref{eq:state_av_haar} (for $\ket\psi=\ket S$).
This must be so, since the Clifford group is a unitary 3-design.

Finally, we note that $W^{\mathbb C_n,(t)}$ is diagonally dominant for large~$n$ in a similar manner as~$G^{\mathbb C_n,(t)}$ in \cref{eq:gram_pert cliff}.
We make this precise for~$t=4$, since only this case will be important for us (but a similar argument with different constants works for any~$t$).
The operator norm of the perturbation~$2^{-n} E$ in \cref{eq:gram_pert cliff} can be bounded as
\begin{align*}
  2^{-n} \norm E
&= 2^{-4n} \norm{G^{(4)} - 2^{4n}I } \\
&\leq 2^{-4n} \max_{T\in\Sigma_{4,4}} \sum_{T'\in\Sigma_{4,4}} \abs{G^{\mathbb C_n,(4)}_{T,T'} - 2^{4n} \delta_{T,T'} } \\
&= 2^{-4n} \max_{T\in\Sigma_{4,4}} \sum_{T'\in\Sigma_{4,4}, T' \neq T} \abs{G^{\mathbb C_n,(4)}_{T,T'}} \\
&= 2^{-4n} \parens*{ 7 \cdot 2^{3n} + 14 \cdot 2^{2n} + 8 \cdot 2^{n} },
\end{align*}
where we first used that the operator norm of a symmetric matrix can be upper bounded in terms of the maximum $\ell^1$-norm of its rows as a consequence of the estimate $\norm A \leq \sqrt{\norm A_{1\to1} \norm A_{\infty\to\infty}}$,which follows from H\"older's inequality, and then a direct calculation using the explicit description of~$\Sigma_{4,4}$ from above.
For $n\geq4$, we have $\norm E \leq 8$ and $2^{-n} \norm E \leq 1/2$.
Hence the inverse of $G^{\mathbb C_n,(4)} = 2^{4n} (I + 2^{-n} E)$, i.e., the Clifford-Weingarten matrix~$W^{\mathbb C_n,(4)}$, is given by a geometric series,
\begin{align}\label{eq:weingarten bound}
  W^{\mathbb C_n,(4)}
= 2^{-4n} \sum_{k=0}^\infty \parens*{ -2^{-n} E }^k
= 2^{-4n} \parens*{ I + 2^{-n} F },
\end{align}
where $F = \sum_{k=0}^\infty (-1)^k 2^{-kn} E^{k+1}$ is such that
\begin{align}\label{eq:weingarten bound 2}
  \norm F
\leq \norm E \sum_{k=0}^\infty \parens*{ 2^{-n} \norm E }^k
\leq \frac {\norm E} { 1 - 2^{-n} \norm E }
\leq 16.
\end{align}
\Cref{eq:weingarten bound,eq:weingarten bound 2} show that the off-diagonal entries of the Clifford-Weingarten matrix~$W^{\mathbb C_n,(4)}$ are again sub-leading by at least a factor~$\mathcal O(2^{-n})$ in the limit of large~$n$.
In particular, $W^{\mathbb C_n, (4)}$ is diagonally dominant for large $n$.

\section{Thrifty shadow tomography}
In this section we provide proofs of the statements made in the Letter regarding the statistical efficiency of thrifty shadow tomography.

\begin{lemma}[Reprint of \cref{lem:var_thrift}]\label{lem1}
The variance of the random variable~$\mathbf{X}_R$ is given by
\begin{align}
  \mathbb{V}_R(O, \rho)
&= \frac1R \mathbb{V}(O, \rho) + \frac{R-1}R \mathbb{V}_{\!*}(O, \rho),
\end{align}
where $\mathbb{V}(O, \rho)$ is the variance of the random variable $\mathbb{X}_1$, as in ordinary shadow estimation, while
\begin{align}
  \mathbb{V}_{\!*}(O, \rho)
:= \VV(\EE(\mathbf{X}_1|\mathbf{U}))
= \mathbb{V}_U\bigl(\EE_x\braa{O} \mathcal F^{-1} \mc{U}^\dagger \kett{x}\bigr).
\end{align}
\end{lemma}

\begin{proof}
By the law of total variance,
\begin{align}\label{eq:TVR}
  \VV_R(O, \rho)
= \VV(\mathbf{X}_R)
= \EE(\VV(\mathbf{X}_R|\mathbf{U})) + \VV(\EE(\mathbf{X}_R|\mathbf{U}))
= \frac1R \EE(\VV(\mathbf{X}_1|\mathbf{U})) + \VV(\EE(\mathbf{X}_1|\mathbf{U})),
\end{align}
where we used that $\mathbf{X}_R=\frac1R\sum_{r=1}^R \mathbf{X}^{(r)}$, where the $\mathbf{X}^{(r)}$ are identically distributed, and independent conditional on~$\mathbf{U}$.
On the other hand, again by the law of total variance,
\begin{align}\label{eq:TV1}
  \VV(O, \rho)
= \VV(\mathbf{X}_1)
= \EE(\VV(\mathbf{X}_1|\mathbf{U})) + \VV(\EE(\mathbf{X}_1|\mathbf{U})).
\end{align}
Together,
\begin{align*}
  \VV_R(O, \rho)
= \frac1R \left( \VV(O,\rho) - \VV(\EE(\mathbf{X}_1|\mathbf{U})) \right) + \VV(\EE(\mathbf{X}_1|\mathbf{U}))
= \frac1R \VV(O,\rho) + \frac{R-1}R \VV(\EE(\mathbf{X}_1|\mathbf{U})),
\end{align*}
and now the claim follows.
\end{proof}

\begin{theorem}[Reprint of \cref{thm:4design}]
The variance of thrifty shadow estimation with any 4-design circuit set satisfies
\begin{equation*}
\mathbb{V}_R(O,\rho) = \frac{1}{R} \mathbb{V}(O,\rho) + \frac{R-1}R \mathcal O(2^{-n}\tr(O^2))
\end{equation*}
for any traceless observable~$O$, with $\mathbb{V}(O, \rho)$ the variance associated with standard shadow estimation.
\end{theorem}
\begin{proof}
In view of equation $(3)$ in the main text it suffices to bound~$\mathbb{V}_{\!*}(O, \rho)$.
We begin by expanding equation $(4)$ in the main text,
\begin{align}
\nonumber
  \mathbb{V}_{\!*}(O, \rho)
&= \int_{U(2^n)} dU \parens*{ \sum_{x \in \{0,1\}^n} \braa{O} \mathcal F^{-1} \mc{U}^\dagger \kett{x} \braa{x} \mc{U} \kett{\rho} }^2
- (\tr O\rho)^2 \\
\label{eq:V4 expansion}
&= \underbrace{(2^n+1)^2 \int_{U(2^n)} dU \sum_{x\in\{0,1\}^n}\braa{x\tn{4}}\mc{U}\tn{4}\kett{(O\otimes \rho)\tn{2}}}_{(*)} \\
\nonumber
&+ \underbrace{(2^n+1)^2\int_{U(2^n)} dU\sum_{x\neq\hat{x}\in \{0,1\}^n}\braa{x\tn{2}\otimes \hat{x}\tn{2}}\mc{U}\tn{4}\kett{(O\otimes \rho)\tn{2}}}_{(**)} - \parens*{ \tr(O\rho) }^2,
\end{align}
where we used that $\mathbb{U}$ is a $4$-design to replace the average over $\mathbb{U}$ with a Haar average over the unitary group, as well as the expression of the inverse frame operator from \cite{huang2020predicting}.
We begin by analyzing the first term.
By \cref{eq:state_av_haar} we have exactly:
\begin{align*}
(*)
= (2^n+1)^2 \sum_{x\in \{0,1\}^n} \int_{U(2^n)} dU\braa{x\tn{4}}\mc{U}\tn{4}\kett{(O\otimes \rho)\tn{2}}
= \frac{2^n (2^n+1)^2}{2^n(2^n+1)(2^n+2)(2^n+3)} \sum_{\pi \in S_4} \braakett{R_\pi}{(O\otimes \rho)\tn{2}},
\end{align*}
where we used the invariance of the Haar measure to absorb the $2^n$ terms in the sum over $x\in \{0,1\}^n$
It is not hard to see (and we prove in \cref{lem:v_bound 1} in the appendix) that $\abs{\braakett{R_\pi}{(O\otimes \rho)\tn{2}}} \leq \tr(O^2)$ for any permutation~$\pi\in S_4$.
This immediately implies
\begin{align*}
(*) &\leq \frac{2^n+1}{(2^n+2)(2^n+3)} |S_4| \tr(O^2) = \mathcal O\bigl(\tr(O^2)2^{-n} \bigr).
\end{align*}
To bound the second term in \cref{eq:V4 expansion}, we first note that, for $x\neq\hat{x}$, we have $\braakett{x\tn{2}\otimes \hat{x}\tn{2}}{R_\tau } =0$ unless $\tau \in \{e, (12), (34), (12)(34)\}$, in which case~$\braakett{x\tn{2}\otimes \hat{x}\tn{2}}{R_\tau } = 1$.
We prove this in \cref{lem:v_bound 2} in the appendix.
Hence we have, with $W^{U(2^n), (4)}$ the Weingarten matrix associated to the unitary group, as defined above \cref{eq:moment via weingarten unitary},
\begin{align*}
(**) &= (2^n+1)^2 \int_{U(2^n)} dU \sum_{x\neq \hat{x}}\braa{x\tn{2}\otimes \hat{x}\tn{2}}\mc{U}\tn{4}\kett{(O\otimes \rho)\tn{2}}\\
&= (2^n+1)^2 \sum_{\tau,\pi\in S_4} W^{U(2^n),(4)}_{\tau, \pi} \sum_{x\neq \hat{x}}\braakett{x\tn{2}\otimes \hat{x}\tn{2}}{R_\tau} \braakett{R_\pi}{(O\otimes \rho)\tn{2}} \\
&= 2^n (2^n-1) (2^n+1)^2 \sum_{\pi \in S_4} \parens*{ W^{U(2^n),(4)}_{e, \pi} + W^{U(2^n),(4)}_{(12),\pi} + W^{U(2^n),(4)}_{(34),\pi}+ W^{U(2^n),(4)}_{(12)(34),\pi}} \braakett{R_\pi}{(O\otimes \rho)\tn{2}}.
\end{align*}
As above we know that $\abs{\braakett{R_\pi}{(O\otimes \rho)\tn{2}}} \leq \tr(O^2)$.
Combining this with the fact that $W^{U(2^n),(4)}_{\pi, \pi'} = 2^{-4n} (\delta_{\pi,\pi'} + \mathcal O(2^{-n}))$, as we know from \cref{eq:weingarten bound unitary}, we obtain
\begin{align*}
  (**)
&= \parens*{ \braakett{R_{(12)}}{(O\otimes \rho)\tn{2}} + \braakett{R_{(34)}}{(O\otimes \rho)\tn{2}}+ \braakett{R_{(12)(34)}}{(O\otimes \rho)\tn{2}} }
+ \mathcal O(2^{-n}\tr(O^2)) \\
&= \braakett{R_{(12)(34)}}{(O\otimes \rho)\tn{2}} + \mathcal O(2^{-n}\tr(O^2))
= \tr(O \rho)^2 + \mathcal O(2^{-n}\tr(O^2)),
\end{align*}
where we used that $\tr(O)=0$ by assumption.
If we insert the bounds on (*) and (**) into \cref{eq:V4 expansion} then we obtain
\begin{equation}\label{eq:V4 unitary}
\mathbb{V}_{\!*}(O,\rho)
= (\tr O\rho)^2 + \mathcal O(2^{-n}\tr(O^2)) - (\tr O\rho)^2
= \mathcal O(2^{-n}\tr(O^2)).
\end{equation}
By equation $(3)$ from the main text, and noting that $\frac{R-1}{R}\leq 1$, this implies the desired result.
\end{proof}

% Note that this can be seen as a concentration of measure phenomenon.
% However it does not seem to be directly derivable from standard tools such as Levy's lemma.
% One can see that $\mathbb{V}_{\!*}(O,\rho)$ is not sufficiently Lipschitz continuous as a function of the unitary~$U$ for this to be immediately applicable.

% %-----------------------------------------------------------------------------
% \subsection{Multi-qubit Clifford circuits}\label{subsec:multi_var}
% %-----------------------------------------------------------------------------
% In this section we will \emph{lower bound} the variance of thrifty shadow estimation for the multi-qubit Clifford group.
% In particular we will prove that there exists states $\rho$ and observables $O$ such that the variance $\VV_R(O,\rho)$ in \cref{eq:var_thrift} is \emph{independent} of $R$ (in the limit of many qubits).
% Concretely we prove:

\begin{theorem}[Reprint of \cref{thm:clifford}]
Consider thrifty shadow estimation with the $n$-qubit Clifford group~$\mathbb{C}_n$.
For any pure stabilizer state $\rho=\dens{S}$ and the traceless observable $O = \dens{S} - 2^{-n}I$, we have
\begin{equation*}
  \mathbb{V}_R(O,\rho) = 2+  \mathcal O(2^{-n}).
\end{equation*}
\end{theorem}
\begin{proof}
We again start from equation $(3)$ in the main text.
Using our choice of~$\rho$ and~$O$, one finds that
\begin{align*}
  \mathbb{V}(O, \rho)
&= \frac{2^n+1}{2^n+2} \parens*{ \tr(O^2) + 2 \tr(\rho O^2) } - \parens*{ \tr(\rho O) }^2 \\
&= \frac{2^n+1}{2^n+2} \parens*{ \tr\mleft( \parens*{ \dens{S} - 2^{-n}I }^2 \mright) + 2 \tr\mleft( \dens{S} \parens*{ \dens{S} - 2^{-n}I }^2 \mright) } - \parens*{ \tr\mleft( \dens{S} \parens*{ \dens{S} - 2^{-n}I } \mright) }^2 \\
% &= \frac{2^n+1}{2^n+2} \parens*{ \tr\mleft( \dens{S} - 2 \cdot 2^{-n} \dens{S} + 2^{-2n} I \mright) + 2 \tr\mleft( \dens{S} - 2 \cdot 2^{-n} \dens{S} + 2^{-2n} \dens{S} \mright) } - \parens*{ \tr\mleft( \dens{S} - 2^{-n} \dens{S} \mright) }^2 \\
% &= \frac{2^n+1}{2^n+2} \parens*{ 1 - 2 \cdot 2^{-n} + 2^{-n} + 2 \parens*{ 1 -  2 \cdot 2^{-n} + 2^{-2n} } } - \parens*{ 1 - 2^{-n} }^2 \\
% &= \frac{2^n+1}{2^n+2} \parens*{ 3 - 5 \cdot 2^{-n} + 2 \cdot 2^{-2n} } - \parens*{ 1 - 2 \cdot 2^{-n} + 2^{-2n} } \\
&= 2 + \mathcal O(2^{-n}),
\end{align*}
and thus
\begin{equation}\label{eq:V cliff}
\mathbb{V}_R(O,\rho) = \frac{2}{R}+ \frac{R-1}{R}\mathbb{V}_{\!*}(O,\rho) + \mathcal O(2^{-n}).
\end{equation}
In the remainder of the proof we will show that $\mathbb{V}_{\!*} = 2 + \mathcal O(2^{-n})$, which suffices to establish the result.
The strategy is to relate~$\mathbb{V}_{\!*}$ to the corresponding quantity \emph{for the unitary group}, which we know to be small from \cref{eq:V4 unitary}.
Similarly to the derivation of \cref{eq:V4 expansion},
\begin{align}\label{eq:Vstar cliff}
  \mathbb{V}_{\!*}(O, \rho)
&= \frac{(2^n+1)^2}{\abs{\mathbb{C}_n}} \sum_{C \in\mathbb{C}_n} \sum_{x, \hat x \in \{0,1\}^n} \braa{x^{\ot 2} \ot \hat x^{\ot 2}} \mc{C}^{\ot 4} \kett{(O \ot \rho)^{\ot 2}} - \parens*{ \Tr(\rho O) }^2.
\end{align}
We would like to compare this with the analogous expression for the unitary group, which we know from \cref{eq:V4 expansion,eq:V4 unitary} satisfies
\begin{align}\label{eq:Vstar unitary}
  (2^n+1)^2 \int_{U(2^n)} dU \sum_{x, \hat x \in \{0,1\}^n} \braa{x^{\ot 2} \ot \hat x^{\ot 2}} \mc{U}^{\ot 4} \kett{(O \ot \rho)^{\ot 2}} - \parens*{ \Tr(\rho O) }^2
% = \mathcal O(2^{-n}\tr(O^2))
= \mathcal O(2^{-n}),
\end{align}
since $\tr(O^2) = 1 - 2^{-n}$ by our choice of~$O = \dens{S} - 2^{-n}I$.
To do this, we note that by our choice of $\rho=\dens{S}$ and~$O$, we have
\begin{align*}
  \mathcal U^{\ot 4} \kett{(O \ot \rho)^{\ot 2}}
= \mathcal U^{\ot 4} \kett{\rho^{\ot 4}}
- 2^{-n} \kett{I} \ot \mathcal U^{\ot 3} \kett{\rho^{\ot 3}}
- 2^{-n} \mathcal U^{\ot 2} \kett{\rho^{\ot 2}} \ot \kett{I} \ot \mathcal U \kett{\rho}
+ 4^{-n} \kett{I} \ot \mathcal U \kett{\rho} \ot \kett{I} \ot \mathcal U \kett{\rho}.
\end{align*}
For the last three terms, averaging the above over the unitary group gives the same result if we only average over the Clifford group, as the latter is a $3$-design.
For the first term we have, by \cref{eq:state_av_haar,eq:state_av_cliff}, and noting that $\rho=\kett S$ is a stabilizer state:
\begin{align*}
  \frac1{\abs{\mathbb{C}_n}} \sum_{C \in\mathbb{C}_n} \mc{C}^{\ot 4} \kett{\rho^{\ot 4}}
&= \frac 1{2^n (2^n + 1) (2^n + 2) (2^n+4)} \parens*{ \sum_{\pi\in S_4} \kett{R_\pi} + \sum_{T\in \hat{S}_3} \kett{R_T} }, \\
  \int_{U(2^n)} dU \; \mc{U}^{\ot 4} \kett{\rho^{\ot 4}}
&= \frac 1{2^n (2^n + 1) (2^n + 2) (2^n+3)} \sum_{\pi\in S_4} \kett{R_\pi},
\end{align*}
where we recall that $\Sigma_{4,4} = S_4 \cup \hat{S}_3$.
It follows from this and \cref{eq:Vstar cliff,eq:Vstar unitary} that
\begin{align}
\nonumber
  \mathbb{V}_{\!*}(O, \rho)
&= \frac{(2^n+1)^2}{\abs{\mathbb{C}_n}} \sum_{C \in\mathbb{C}_n} \sum_{x, \hat x \in \{0,1\}^n} \braa{x^{\ot 2} \ot \hat x^{\ot 2}} \mc{C}^{\ot 4} \kett{\rho^{\ot 4}} \\
\nonumber
&\quad\hspace{5em} - (2^n+1)^2 \int_{U(2^n)} dU \sum_{x, \hat x \in \{0,1\}^n} \braa{x^{\ot 2} \ot \hat x^{\ot 2}} \mc{U}^{\ot 4} \kett{\rho^{\ot 4}}
+ \mathcal O(2^{-n}) \\
% &= \frac {(2^n+1)^2} {2^n (2^n + 1) (2^n + 2) (2^n+4)} \sum_{x, \hat x \in \{0,1\}^n} \parens*{ \sum_{\pi\in S_4} \braakett{x^{\ot 2} \ot \hat x^{\ot 2}}{R_\pi} + \sum_{T\in \hat{S}_3} \braakett{x^{\ot 2} \ot \hat x^{\ot 2}}{R_T} }
% - \frac {(2^n+1)^2} {2^n (2^n + 1) (2^n + 2) (2^n+3)} \sum_{x, \hat x \in \{0,1\}^n} \sum_{\pi\in S_4} \braakett{x^{\ot 2} \ot \hat x^{\ot 2}}{R_\pi}
% + \mathcal O(2^{-n}) \\
\label{eq:Vstar cliff 2}
&= \underbrace{\frac {(2^n+1)^2} {2^n (2^n + 1) (2^n + 2) (2^n+4)} \sum_{x, \hat x \in \{0,1\}^n} \sum_{T\in \hat{S}_3} \braakett{x^{\ot 2} \ot \hat x^{\ot 2}}{R_T}}_{(*)} \\
\nonumber
&\quad \hspace{2em}- \underbrace{\frac {(2^n+1)^2} {2^n (2^n + 1) (2^n + 2) (2^n+3)} \parens*{1 - \frac{2^n+3}{2^n+4} } \sum_{x, \hat x \in \{0,1\}^n} \sum_{\pi\in S_4} \braakett{x^{\ot 2} \ot \hat x^{\ot 2}}{R_\pi}}_{(**)}
\;+\; \mathcal O(2^{-n}).
\end{align}
Finally, note that
\begin{align*}
  (*)
&= \frac {(2^n+1)^2} {2^n (2^n + 1) (2^n + 2) (2^n+4)} \sum_{x, \hat x \in \{0,1\}^n} \sum_{T\in \hat{S}_3} \braakett{x^{\ot 2} \ot \hat x^{\ot 2}}{R_T} \\
&= \frac {(2^n+1)^2} {2^n (2^n + 1) (2^n + 2) (2^n+4)} \parens*{ 2^n \cdot 6 + 2^n (2^n-1) \cdot 2 } \\
&= 2 + \mathcal O(2^{-n}),
\end{align*}
using the explicit formula for $\braakett{x^{\ot 2} \ot \hat x^{\ot 2}}{R_T}$ from \cref{lem:v_bound 2} in the appendix.
Furthermore,
\begin{align*}
  (**)
= \frac {(2^n+1)^2} {2^n (2^n + 1) (2^n + 2) (2^n+3) (2^n+4)} \sum_{x, \hat x \in \{0,1\}^n} \sum_{\pi\in S_4} \braakett{x^{\ot 2} \ot \hat x^{\ot 2}}{R_\pi}
\leq \mathcal O(2^{-n}),
\end{align*}
where we used that $\abs{\braakett{x^{\ot 2} \ot \hat x^{\ot 2}}{R_\pi} } \leq 1$ by H\"older's inequality for the Schatten norm.
If we substitute the above into \cref{eq:Vstar cliff 2} and the latter in turn into \cref{eq:V cliff} we obtain the desired result.
\end{proof}

\begin{theorem}[Reprint of \cref{thm:interpolating}]
The variance of thrifty shadow estimation with the circuit set~$\mathbb{U}_k$ defined in equation $(5)$ in the main text satisfies
\begin{equation*}
 0\leq\mathbb{V}_R(O,\rho) -\frac{1}{R} \mathbb{V}(O,\rho)  \leq \frac{R-1}R \mathcal O(2^{-n}\tr(O^2))
+ \frac{R-1}R 30 \tr(O^2) \parens*{ 1 + \mathcal O(2^{-n}) } \parens*{ \frac{3}{4} + \mathcal O(2^{-n}) }^k
\end{equation*}
for any traceless observable~$O$, with $\mathbb{V}(O, \rho)$ the variance associated with standard shadow estimation.
\end{theorem}
\begin{proof}
Recall from \cref{eq:moment op unitary,eq:moment op cliff} that the moment operators for the multi-qubit unitary and Clifford groups are defined as
\begin{equation*}
  \mc{M}^{U(2^n),(4)} = \int_{U(2^n)} dU\, \mc{U} \tn{4}
\quad\text{and}\quad
  \mc{M}^{\mathbb C_n,(4)} = \frac1{\abs{\mathbb{C}_n}} \sum_{C\in \mathbb{C}_n}\mc{C}\tn{4}.
\end{equation*}
Following our usual notation, $\mathcal T$ denotes the superoperator applying the $\T$-gate on the first qubit.
Then the moment operator associated to the circuit set~$\mathbb{U}_k$ is given by
\begin{align}\label{eq:homeo moment}
  \frac1{\mathbb{U}_k} \sum_{U \in \mathbb{U}_k} \mathcal U^{\ot4}
= \mc{M}^{\mathbb C_n,(4)} \mathcal T^{\ot4} \mc{M}^{\mathbb C_n,(4)} \cdots \mc{M}^{\mathbb C_n,(4)} \mathcal T^{\ot4} \mc{M}^{\mathbb C_n,(4)}
= \parens*{ \mc{M}^{\mathbb C_n,(4)} \mathcal T^{\ot4} }^k \mc{M}^{\mathbb C_n,(4)}.
\end{align}
By the unitary invariance of the Haar measure, we have
\begin{align*}
  \mc{T}^{\ot4} \mc{M}^{U(2^n),(4)} = \mc{M}^{U(2^n),(4)} \mc{T}^{\ot4} = \mc{M}^{U(2^n),(4)},
\end{align*}
as well as
\begin{align*}
  \mc{M}^{\mathbb C_n,(4)} \mc{M}^{U(2^n),(4)} = \mc{M}^{U(2^n),(4)} \mc{M}^{\mathbb C_n,(4)} = \mc{M}^{U(2^n),(4)}.
\end{align*}
Accordingly, we can rewrite \cref{eq:homeo moment} as follows:
\begin{align}\label{eq:uk moment}
  \frac1{\mathbb{U}_k} \sum_{U \in \mathbb{U}_k} \mathcal U^{\ot4}
= \mc{M}^{U(2^n),(4)} - \mc{E},
\quad\text{where}\quad
\mc{E} := \bracks*{ \parens*{ \mc{M}^{\mathbb C_n,(4)} - \mc{M}^{U(2^n),(4)} } \mathcal T^{\ot4}  }^k \parens*{ \mc{M}^{\mathbb C_n,(4)} - \mc{M}^{U(2^n),(4)} }
\end{align}
Using \cref{eq:moment via weingarten unitary,eq:moment via weingarten clifford}, we can express~$\mc{M}^{\mathbb C_n,(4)}-\mc{M}^{U(2^n),(4)}$ in the 30-dimensional non-orthogonal basis~$\kett{R_T}$ for~$T\in\Sigma_{4,4}$,
\begin{align*}
  \mc{M}^{\mathbb C_n,(4)} -\mc{M}^{U(2^n),(4)}
= \sum_{T, T'\in \Sigma_{4,4}} P_{T,T'} \kett{R_T}\braa{R_{T'}},
\end{align*}
where
\begin{align*}
  P = \bracks*{ \begin{array}{c|c}
    \quad W^{\mathbb{C}_n, (t)}\big|_{S_4 \times S_4} - W^{U(2^n), (t)} \quad & \quad W^{\mathbb{C}_n, (t)}\big|_{S_4 \times \hat{S}_3} \quad \\[.2cm]
    \hline\\[-0.4cm]
    W^{\mathbb{C}_n, (t)}\big|_{\hat{S}_3 \times S_4} & \quad W^{\mathbb{C}_n, (t)}\big|_{\hat{S}_3 \times \hat{S}_3} \quad
  \end{array} }
\end{align*}
is a $(24+6)\times(24+6)$ block matrix with respect to $\Sigma_{4,4} = S_4 \cup \hat{S}_3$, with $W^{U(2^n), (t)}$ the $24\times24$ Weingarten matrix of the unitary group and $W^{\mathbb{C}_n, (t)}$ the Weingarten matrix for the Clifford group.
From \cref{eq:weingarten bound unitary,eq:weingarten bound}, we know that $W^{U(2^n),(4)} = 2^{-4n} \parens*{ I_{24} + 2^{-n} F' }$ and $W^{\mathbb{C}_n,(4)} = 2^{-4n} \parens*{ I_{30} + 2^{-n} F'' }$ for matrices $F'$, $F''$ with bounded entries.
Thus the block matrix $P$ takes the form
\begin{align}\label{eq:P}
  P = 2^{-4n} \bracks*{ \begin{array}{c|c}
     0  &  0  \\
    \hline
    0 &  I
  \end{array} }
  + \mathcal O(2^{-5n}).
\end{align}
Next, consider a matrix~$Q$ with entries
\begin{align*}
  Q_{T,T'} := \braa{R_T} \mc{T}\tn{4} \kett{R_{T'}}
\end{align*}
for $T,T'\in\Sigma_{4,4}$.
Because $\mc{T}^{\ot 4}$ commutes with permutations, whenever $T\in S_4$ or $T'\in S_4$, or both, we have, by \cref{eq:gram_pert cliff}, that
\begin{align*}
  Q_{T,T'}
= \braakett{R_T}{R_{T'}}
= G^{\mathbb C_n,(4)}_{T,T'}
= 2^{4n} \parens*{ \delta_{T,T'} + \mathcal O(2^{-n}) },
\end{align*}
while if both $T\in\hat{S}_3$ and $T'\in\hat{S}_3$ then we prove in \cref{lem:t_gate} that
\begin{align*}
  Q_{T,T'}
= 2^{4n} \parens*{ \frac34 \delta_{T,T'} + \mathcal O(2^{-n}) }.
\end{align*}
Together,
\begin{align*}
  Q = 2^{4n} \bracks*{ \begin{array}{c|c}
     I  &  0  \\
    \hline
    0 &  \frac34I
  \end{array} }
  + \mathcal O(2^{3n}),
\end{align*}
and hence
\begin{align}\label{eq:PQ}
  P Q
= \bracks*{ \begin{array}{c|c}
     0  &  0  \\
    \hline
    0 & \frac34 I
  \end{array} }
  + \mathcal O(2^{-n}).
\end{align}
We are finally in a position to compare
\begin{align*}
  \mathbb{V}_{\!*}(O, \rho)
&= \frac{(2^n+1)^2}{\abs{\mathbb{U}_k}} \sum_{U \in\mathbb{U}_k} \sum_{x, \hat x \in \{0,1\}^n} \braa{x^{\ot 2} \ot \hat x^{\ot 2}} \mc{U}^{\ot 4} \kett{(O \ot \rho)^{\ot 2}} - \parens*{ \Tr(\rho O) }^2.
\end{align*}
with the analogous expression for the unitary group, which we know from \cref{eq:V4 expansion,eq:V4 unitary} satisfies
\begin{align*}
  (2^n+1)^2 \sum_{x, \hat x \in \{0,1\}^n} \braa{x^{\ot 2} \ot \hat x^{\ot 2}} \mc{M}^{U(2^n),(4)} \kett{(O \ot \rho)^{\ot 2}} - \parens*{ \Tr(\rho O) }^2
= \mathcal O(2^{-n}\tr(O^2)).
\end{align*}
In view of \cref{eq:uk moment}, it follows that
\begin{align*}
  \mathbb{V}_{\!*}(O, \rho)
=
- (2^n+1)^2 \sum_{x, \hat x \in \{0,1\}^n} \braa{x^{\ot 2} \ot \hat x^{\ot 2}} \mc{E} \kett{(O \ot \rho)^{\ot 2}}
+ \mathcal O(2^{-n}\tr(O^2)),
\end{align*}
where
\begin{align*}
  \mc{E}
= \bracks*{ \parens*{ \mc{M}^{\mathbb C_n,(4)} - \mc{M}^{U(2^n),(4)} } \mathcal T^{\ot4}  }^k \parens*{ \mc{M}^{\mathbb C_n,(4)} - \mc{M}^{U(2^n),(4)} }
= \sum_{T, T'\in \Sigma_{4,4}} \parens*{ (P Q)^k P }_{T,T'} \kett{R_T}\braa{R_{T'}}.
\end{align*}
Denote by $v_\psi$ the vector with entries $(v_\psi)_T := \braakett{\psi}{R_T}$ for $T\in\Sigma_{4,4}$.
Then,
\begin{align*}
  (2^n+1)^2 \sum_{x, \hat x \in \{0,1\}^n} \braa{x^{\ot 2} \ot \hat x^{\ot 2}} \mc{E} \kett{(O \ot \rho)^{\ot 2}}
&= (2^n+1)^2 \sum_{x, \hat x \in \{0,1\}^n} \bra{v_{x^{\ot 2} \ot \hat x^{\ot 2}}} (P Q)^k P \ket{v_{(O \ot \rho)^{\ot 2}}} \\
&\leq (2^n+1)^2 \sum_{x, \hat x \in \{0,1\}^n} \norm{v_{x^{\ot 2} \ot \hat x^{\ot 2}}} \norm{(P Q)^k P} \norm{v_{(O \ot \rho)^{\ot 2}}} \\
&\leq 30 \frac {(2^n+1)^2 2^{2n}} {2^{4n}} \tr(O^2) \, \parens*{ 1 + \mathcal O(2^{-n}) } \parens*{ \frac34 + \mathcal O(2^{-n}) }^k \\
&= 30 \tr(O^2) \parens*{ 1 + \mathcal O(2^{-n}) } \parens*{ \frac{3}{4} + \mathcal O(2^{-n}) }^k,
\end{align*}
where in the last inequality we used that $\norm{v_{x^{\ot 2} \ot \hat x^{\ot 2}}} \leq \sqrt{30}$ by \cref{lem:v_bound 2}, $\norm{(P Q)^k P} \leq (3/4 + \mathcal O(2^{-n}))^k 2^{-4n} (1 + \mathcal O(2^{-n}))$ by \cref{eq:P,eq:PQ}, and $\norm{v_{(O \ot \rho)^{\ot 2}}} \leq \sqrt{30}\tr(O^2)$ by \cref{lem:v_bound 1}.
We conclude that
\begin{align*}
  \mathbb{V}_{\!*}(O, \rho)
\leq 30 \tr(O^2) \parens*{ 1 + \mathcal O(2^{-n}) } \parens*{ \frac{3}{4} + \mathcal O(2^{-n}) }^k + \mathcal O(2^{-n}\tr(O^2)).
\end{align*}
Now the claim follows from equation $(3)$ in the main text.
\end{proof}

%=============================================================================
\section{Tail bounds for shadow estimation}\label{sec:tails}
%=============================================================================
In this section we revisit the use of median-of-means estimation in shadow estimation with circuit sets that are (at least) $3$-designs. We will prove \cref{thm:unit_tails,thm:clifford_tails}.

\begin{theorem}[Reprint of \cref{thm:unit_tails}]
Consider shadow estimation with the $n$-qubit unitary group as circuit set, state~$\rho$, and traceless observable~$O$.
The moment generating function of the random variable~$\mathbf{X}_n=\mathbf{X}$, for $\abs{t} < \norm{O}_{\HS}^{-1}$, upper bounded as
\begin{equation*}
  \mathbb{E}\mleft( e^{t \mathbf{X}_n} \mright)
\leq 1 + t\tr(O\rho) + t^2 \norm{O}_{\HS}^2 \frac{3- 2 t\norm{O}_{\HS}}{(1- t\norm{O}_{\HS})^2}.
\end{equation*}
Moreover, for $N$ i.i.d.\ copies $\mathbf{X}^{(1)}_n, \ldots \mathbf{X}_n^{(N)}$ of~$\mathbf{X}_n$, we have a Bernstein-like tail~bound:
\begin{align*}
\Pr\mleft( \abs*{ \frac1N \sum_{i=1}^N \mathbf{X}_n^{(i)} - \mathbb{E}(\mathbf{X}_n) } \geq \eps \mright)
\leq
\begin{cases}
2 \exp\mleft(- \frac{N \eps^2} {48 \norm{O}_{\HS}^2} \mright) & \text{ if $\eps \leq 12 \norm{O}_{\HS}$}, \\
2 \exp\mleft(- \frac{N \eps} {4 \norm{O}_{\HS}} \mright) & \text{ if $\eps > 12 \norm{O}_{\HS}$}.
\end{cases}
\end{align*}
\end{theorem}
\begin{proof}
We wish to compute the moment generating function
\begin{align*}
  \mathbb{E}\mleft( e^{t \mathbf{X}_n} \mright)
= \int_{U(2^n)} dU \sum_{x\in\{0,1\}^n} \braa{x} \mc{U} \kett{\rho} e^{t \braa{O}\mathcal{F}^{-1} \mc{U}\ct\kett{x}}
= \int_{U(2^n)} dU \sum_{x\in\{0,1\}^n} \braa{x} \mc{U} \kett{\rho} e^{t (2^n+1) \braa{x} \mc{U} \kett{O}}.
\end{align*}
Using the series expansion of the exponential function, we get
\begin{align}
\nonumber
  \mathbb{E}\mleft( e^{t \mathbf{X}_n} \mright)
&= \int_{U(2^n)} dU \sum_{x\in\{0,1\}^n} \braa{x} \mc{U} \kett{\rho} \sum_{m=0}^\infty \frac { \parens*{ t (2^n+1) \braa{x} \mc{U} \kett{O} }^m } {m!} \\
\nonumber
&= \sum_{m=0}^\infty \frac {t^m (2^n+1)^m} {m!} \sum_{x\in\{0,1\}^n} \int_{U(2^n)} dU \braa{x^{\ot(m+1)}} \mc{U}^{\ot(m+1)} \kett{O^{\ot m} \ot \rho} \\
\label{eq:series}
&= \sum_{m=0}^\infty \frac {t^m 2^n (2^n+1)^m} {m!} \parens*{ \prod_{\ell=0}^m \frac 1 {2^n+\ell} } \sum_{\pi\in S_{m+1}} \braakett{R_\pi}{O^{\ot m} \ot \rho},
\end{align}
where we used \cref{eq:state_av_haar} in the last step to evaluate the Haar integral.
Note that for any finite number of qubits~$n$, this series converges for all values of~$t$.
We wish to prove an $n$-independent upper bound.
To start,
\begin{align*}
  \braakett{R_\pi}{O^{\ot m} \ot \rho}
= \tr\mleft( R_\pi^\dagger \parens*{ O^{\ot m} \ot \rho } \mright)
= \tr(\rho O^{k_1}) \tr(O^{k_2}) \cdots \tr(O^{k_j})
\end{align*}
for certain numbers $k_1 + \dots + k_j = m$ and $j\geq1$ that depend on the disjoint cycle decomposition of the permutation~$\pi$.
Now,
\begin{align*}
  \abs*{ \tr(\rho O^k) } \leq \norm{O^k}_\infty = \norm{O}^k_\infty \leq \norm{O}^k_{\HS}
\quad\text{and}\quad
  \tr(O^{k+1}) \leq \norm{O}_{\HS} \norm{O^k}_{\HS} \leq \norm{O}_{\HS}^{k+1},
\end{align*}
for all $k\geq0$, while $\tr(O)=0$, and hence
$\abs*{ \braakett{R_\pi}{O^{\ot m} \ot \rho} } \leq \norm{O}_{\HS}^m$.
We can thus upper bound \cref{eq:series}, as follows:
\begin{align*}
  \mathbb{E}\mleft( e^{t \mathbf{X}_n} \mright)
&\leq 1 + t \tr(O \rho) + \sum_{m=2}^\infty \frac {(m+1) 2^n (2^n+1)^m} {\prod_{\ell=2}^m (2^n+\ell) } \parens*{ \abs{t} \norm{O}_{\HS} }^m \\
&\leq 1 + t \tr(O \rho) + \sum_{m=2}^\infty (m+1) \parens*{ \abs{t} \norm{O}_{\HS} }^m \\
% &= 1 + t \tr(O \rho) + \parens*{\frac1{\parens*{1 - 1 t \norm{O}_{\HS} }^2} - \parens*{ 1 + 2 t \norm{O}_{\HS} } } \\
&= 1 + t \tr(O \rho) + t^2 \norm{O}_{\HS}^2 \frac {3 - 2 \abs{t} \norm{O}_{\HS}} {\parens*{ 1 - \abs{t} \norm{O}_{\HS} }^2},
\end{align*}
where we used that $\sum_{m=0}^\infty (m+1) a^m = 1 / (1-a)^2$ for $\abs{a}<1$, with $a = t\norm{O}_{\HS}$.
This proves the first claim of the theorem.

If we further restrict to $\abs{t} \leq \norm{O}_{\HS}^{-1}/2$, then we see from $1+b\leq \exp(b)$ that
\begin{align*}
  \mathbb{E}\mleft( e^{t \mathbf{X}_n} \mright)
\leq 1 + t \tr(O \rho) + 12 t^2 \norm{O}_{\HS}^2
\leq e^{t \tr(O \rho) + 12 t^2 \norm{O}_{\HS}^2}.
\end{align*}
By Markov's inequality,
\begin{align*}
  \Pr\mleft( \mathbf{X}_n - \Tr(O \rho) \geq \eps \mright)
= \Pr\mleft( e^{t\mathbf{X}_n} \geq e^{t\eps + t\Tr(O \rho)} \mright)
\leq \frac {\mathbb{E}\mleft( e^{t \mathbf{X}_n} \mright)} {e^{t\eps + t\Tr(O \rho)}}
= e^{12 t^2 \norm{O}_{\HS}^2 - t\eps}.
\end{align*}
The choice for~$t$ that minimizes the right-hand side is~$t = \eps\norm{O}_{\HS}^{-2}/24$ if~$\eps \leq 12\norm{O}_{\HS}$, and otherwise~$t = \norm{O}_{\HS}^{-1}/2$, leading to the right tail bound
\begin{align*}
  \Pr\mleft( \mathbf{X}_n - \Tr(O \rho) \geq \eps \mright)
\leq \begin{cases}
\exp\mleft(- \frac{\eps^2} {48 \norm{O}_{\HS}^2} \mright) & \text{ if $\eps \leq 12 \norm{O}_{\HS}$}, \\
\exp\mleft(3 - \frac{\eps} {2 \norm{O}_{\HS}} \mright) \leq \exp\mleft(- \frac{\eps} {4 \norm{O}_{\HS}} \mright) & \text{ if $\eps > 12 \norm{O}_{\HS}$}.
\end{cases}
\end{align*}
We can similarly get a left tail bound, and consequently a two-sided bound by the union bound.
The second claim of the theorem then follows by noting that the moment generating function of a sum of i.i.d.\ random variables factors.
\end{proof}

\begin{lemma}[Proving equation~\eqref{eq:m_th moment}]
Consider shadow estimation with the $n$-qubit Clifford group as circuit set, any $n$-qubit stabilizer state~$\rho=\dens{S}$, and the observable~$O = \dens{S} - 2^{-n} I$.
Let~$\mathbf{X}_n=\mathbf{X}$ denote the associated random variable.
Then, for all~$m\geq0$,
\begin{align}\label{eq:m_th moment appendix}
  \mathbb{E}(\mathbf{X}_n^m)
= (2^n+1)^m \sum_{k=0}^m \binom{m}{k} (-1)^{m-k} 2^{-n(m-k)} \prod_{\ell=0}^{k-1} \frac{2^\ell+1}{2^\ell + 2^n}.
\end{align}
\end{lemma}
\begin{proof}
We have
\begin{align*}
  \mathbb{E}(\mathbf{X}_n^m)
&= \frac{(2^n+1)^m}{\abs{\mathbb{C}_n}} \sum_{C\in\mathbb{C}_n} \sum_{x\in\{0,1\}^n} \braa{x^{\ot(m+1)}} \mc{C}^{\ot(m+1)} \kett{\rho \ot O^{\ot m}} \\
&= \frac{(2^n+1)^m}{\abs{\mathbb{C}_n}} \sum_{C\in\mathbb{C}_n} \sum_{x\in\{0,1\}^n} \sum_{k=0}^m \binom{m}{k} (-2^{-n})^{m-k} \braa{x^{\ot(k+1)}} \mc{C}^{\ot(k+1)} \kett{\rho^{\ot(k+1)}} \\
&= (2^n+1)^m \sum_{k=0}^m \sum_{T\in\Sigma_{k+1,k+1}} \binom{m}{k} (-2^{-n})^{m-k} \frac{1}{\prod_{\ell=0}^{k-1} (2^n + 2^\ell)} \braakett{R_T}{\rho^{\ot(k+1)}}\notag \\
&= (2^n+1)^m \sum_{k=0}^m \binom{m}{k} (-2^{-n})^{m-k} \prod_{\ell=0}^{k-1} \frac{2^\ell+1}{2^\ell + 2^n},
\end{align*}
where the third line follows from \cref{eq:state_av_cliff}.
In the final step we used that $\abs{\Sigma_{t,t}} = \prod_{\ell=0}^{t-2} (2^\ell+1)$ and $\braakett{R_T}{S^{\ot t}}=0$ for every stabilizer state~$\ket{S}$ and any~$T \in \Sigma_{t,t}$, by~\cite[Theorem~4.9 and Eq.~(4.10)]{gross2021schur}.
\end{proof}

\begin{theorem}[Reprint of \cref{thm:clifford_tails}]
Consider shadow estimation with the $n$-qubit Clifford group as circuit set, any $n$-qubit stabilizer state~$\rho=\dens{S}$, and the observable~$O = \dens{S} - 2^{-n} I$, so that $\tr(O^2) \leq 1$.
For every~$n$, let~$\mathbf{X}_n=\mathbf{X}$ denote the associated random variable.
Suppose that the sequence $\mathbf{X}_n$ satisfies a tail bound of the form
\begin{align}\label{eq:tail hypothesis restated}
  \Pr\mleft( \abs{ \mathbf{X}_n - \mathbb{E}(\mathbf{X}_n) } \geq t \mright) \leq A \exp\mleft( - \frac {t^\beta} {B_n} \mright),
\end{align}
for constants $A,\beta>0$ and a positive sequence $(B_n)$.
Then we have that $B_n = \tilde\Omega(2^{\beta n/4})$.
\end{theorem}

\begin{proof}
We start by lower bounding the $n$-th moments of~$\mathbf{X}_n$, using the formula in \cref{eq:m_th moment appendix}:
\begin{align}
\nonumber
  \mathbb{E}(\mathbf{X}_n^n)
&= (2^n+1)^n \sum_{k=0}^n \binom{n}{k} (-1)^{n-k} 2^{-n(n-k)} \prod_{\ell=0}^{k-1} \frac{2^\ell+1}{2^\ell + 2^n} \\
\nonumber
&= \bracks*{ 1 - n 2^{-n} \frac{2^{n-1} + 2^n}{2^{n-1}+1} + \sum_{k=0}^{n-2} \binom{n}{k} (-1)^{n-k} 2^{-n(n-k)} \prod_{\ell=k}^{n-1} \frac{2^\ell + 2^n}{2^\ell+1} } \prod_{\ell=0}^{n-1} \frac{(2^n+1)(2^\ell+1)}{2^n + 2^\ell} \\
\nonumber
&\geq \bracks*{ 1 - \frac{3n}{2^n+2} - \sum_{k=0}^{n-2} \binom{n}{k} 2^{-n(n-k)} \prod_{\ell=k}^{n-1} \frac{2^\ell + 2^n}{2^\ell+1} } \prod_{\ell=0}^{n-1} \frac{(2^n+1)(2^\ell+1)}{2^n + 2^\ell} \\
\nonumber
&= \bracks*{ 1 - \frac{3n}{2^n+2} - \sum_{k=0}^{n-2} \binom{n}{k} \prod_{\ell=k}^{n-1} \frac{2^\ell + 2^n}{2^{\ell+n}+2^n} } \prod_{\ell=0}^{n-1} \frac{(2^n+1)(2^\ell+1)}{2^n + 2^\ell} \\
\nonumber
&\geq \bracks*{ 1 - \frac{3n}{2^n+2} - 2^n \frac{(2^{n-2} + 2^n)(2^{n-1} + 2^n)}{(2^{2n-2}+2^n)(2^{2n-1}+2^n)} } \prod_{\ell=0}^{n-1} \frac{(2^n+1)(2^\ell+1)}{2^n + 2^\ell} \\
\nonumber
&\geq \bracks*{ 1 - \frac{3n+15}{2^n} } \prod_{\ell=0}^{n-1} \frac{(2^n+1)(2^\ell+1)}{2^n + 2^\ell} \\
\label{eq:Xnn lower bound}
&\geq \frac12 \prod_{\ell=0}^{n-1} \frac{(2^n+1)(2^\ell+1)}{2^n + 2^\ell}
% &\geq \frac12 \prod_{\ell=0}^{n-1} \frac{2^\ell+1}{2} \\
% &= 2^{-n-1} \prod_{\ell=0}^{n-1} (2^\ell+1) \\
% &= 2^{-n} \prod_{\ell=1}^{n-1} (2^\ell+1) \\
% &\geq 2^{-n} \prod_{\ell=1}^{n-1} 2^\ell = \\
\geq 2^{\frac {n(n-3)}2}
\geq 2^{n^2/4},
\end{align}
where the second inequality holds because $\prod_{\ell=k}^{n-1} \frac{2^\ell + 2^n}{2^{\ell+n}+2^n}$ is monotonically increasing with~$k$ and the last two inequalities are valid for~$n\geq7$.

We will now show that this super-exponential growth is not commensurate with a tail bound of the form we presumed to exist, by standard arguments.
We first note that \cref{eq:tail hypothesis} imposes strong conditions on the growth of the absolute centered moments restated.
By using that for any nonnegative random variable $\mathbf{Z}$ we have $\mathbb{E}(\mathbf{Z}) = \int_0^\infty \mathbb{P}(\mathbf{Z}\!\geq\! s)\,ds$, we can write the absolute centered $n$-th moment of $\mathbf{X}_n$ as
\begin{align}
\nonumber
  \mathbb{E} \parens*{ \abs{\mathbf{X}_n - \mathbb{E}(\mathbf{X}_n)}^n }
&= \int_0^\infty ds \, \Pr\mleft( \abs{\mathbf{X}_n - \mathbb{E}(\mathbf{X}_n)}^n \geq s \mright) \\
\nonumber
&= \int_0^\infty dt \, n \, t^{n-1} \Pr\mleft( \abs{\mathbf{X}_n - \mathbb{E}(\mathbf{X}_n)} \geq t \mright) \\
\nonumber
&\leq \int_0^\infty dt \, n \, t^{n-1} A \exp\mleft( - \frac {t^\beta} {B_n} \mright) \\
% &= A n \int_0^\infty dt \, t^{n-1} \exp\mleft( - \frac {t^\beta} {B_n} \mright)
&= \frac {A n B_n^{n/\beta}} \beta \Gamma\mleft( \frac n \beta \mright),
\label{eq:Xnn centered gamma bound}
\end{align}
using the relationship between the moments of the ``stretched exponential'' function and the $\Gamma$-function.
We can relate the above to the ordinary moments by the Minkowski inequality,
\begin{align*}
  \parens*{ \mathbb{E} \abs{\mathbf{X}_n}^n }^{1/n}
\leq \parens*{ \mathbb{E} \abs{\mathbf{X}_n - \mathbb{E}(\mathbf{X}_n)}^n }^{1/n} + \mathbb{E}(\mathbf{X}_n)
% = \parens*{ \mathbb{E} \abs{\mathbf{X}_n - \mathbb{E}(\mathbf{X}_n)}^n }^{1/n} + 1 - 2^{-n}
\leq \parens*{ \mathbb{E} \abs{\mathbf{X}_n - \mathbb{E}(\mathbf{X}_n)}^n }^{1/n} + 1,
\end{align*}
where in the last step we used that $\mathbb{E}(\mathbf{X}_n) = \tr(O \rho) = 1 - 2^{-n}$.
From this and \cref{eq:Xnn lower bound,eq:Xnn centered gamma bound}, we obtain the inequality
\begin{align*}
  2^{n/4}
\leq \parens*{ \mathbb{E} \abs{\mathbf{X}_n - \mathbb{E}(\mathbf{X}_n)}^n }^{1/n} + 1
\leq B_n^{1/\beta} \bracks*{ \frac {A n} \beta \Gamma\mleft( \frac n \beta \mright) }^{1/n} + 1.
\end{align*}
Using Stirling's approximation for the $\Gamma$ function it follows that we must have $B_n^{1/\beta} = \tilde\Omega(2^{n/4})$, that is, $B_n = \tilde\Omega(2^{\beta n/4})$.
% There exists a constant $C>0$ such that for sufficiently large~$n$,
% \begin{align*}
%   2^{n/4}
% &\leq \parens*{ C \frac {A n B_n^{n/\beta}} \beta \sqrt{\frac{2\pi\beta}{n}} \parens*{\frac n {\beta e}}^{n/\beta} }^{1/n} + 1 \\
% &= \parens*{\frac1{\beta e}}^{\frac 1 \beta} \parens*{ \frac {AC\sqrt{2\pi n}} {\sqrt\beta} }^{1/n} B_n^{1/\beta} n^{1/\beta} + 1
% \end{align*}
\end{proof}

\begin{corollary}\label{cor:clifford_as_tails}
Consider shadow estimation with the $n$-qubit Clifford group as circuit set, any $n$-qubit stabilizer state~$\rho=\dens{S}$, and the observable~$O = \dens{S} - 2^{-n} I$.
Let~$\mathbf{X}_n=\mathbf{X}$ denote the associated random variable.
Then, for all~$m\geq0$,
\begin{equation*}
  \lim_{n\to\infty} \mathbb{E}(\mathbf{X}_n^m)
= \sum_{k=0}^m \binom{m}{k} (-1)^{m-k} \prod_{\ell=0}^{k-1}(2^\ell+1).
\end{equation*}
For~$m\geq 6$, the right-hand side this can be lower bounded as follows:
\begin{align*}
  \lim_{n\to\infty} \mathbb{E}(\mathbf{X}_n^m)
\geq 2^{\frac{m(m-1)}{2}}
= \Omega(2^{m^2/2}).
\end{align*}
\end{corollary}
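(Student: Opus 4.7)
The first equality I would prove directly from the explicit moment formula \eqref{eq:m_th moment} by passing to the limit $n \to \infty$ term by term. In that formula, factor $(2^n+1)^m = (2^n+1)^{m-k}(2^n+1)^k$ and rewrite the $k$-th summand as
\[ \binom{m}{k}(-1)^{m-k} \parens*{\frac{2^n+1}{2^n}}^{\!m-k} \prod_{\ell=0}^{k-1} \frac{(2^n+1)(2^\ell+1)}{2^\ell+2^n}. \]
Each rational factor converges to $1$ as $n\to\infty$ (the first because $(1+2^{-n})^{m-k}\to 1$, the second because $(2^n+1)/(2^\ell+2^n) \to 1$ for each fixed $\ell$). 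Since this is a finite sum, we can interchange limit and summation and obtain $\sum_{k=0}^m \binom{m}{k}(-1)^{m-k} \prod_{\ell=0}^{k-1}(2^\ell+1)$, as claimed.

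For the lower bound, write $P_k := \prod_{\ell=0}^{k-1}(2^\ell+1)$ and denote the limit by $L = \sum_{k=0}^m \binom{m}{k}(-1)^{m-k} P_k$. The strategy is to isolate the $k=m$ term and show that the remaining alternating tail is dominated by it. The $k=m$ term equals $P_m$, and since $P_m = 2 \cdot \prod_{\ell=1}^{m-1}(2^\ell+1) \geq 2 \cdot \prod_{\ell=1}^{m-1} 2^{\ell} = 2^{1+m(m-1)/2}$, it already exceeds the target by a factor of $2$. It thus suffices to show $\abs{L - P_m} \leq P_m / 2$ for $m \geq 6$.

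To do this I would use the elementary estimate $P_k/P_m = \prod_{\ell=k}^{m-1}(2^\ell+1)^{-1} \leq 2^{-(m-k)(m+k-1)/2}$, which decays super-exponentially in $m-k$. Substituting $j=m-k$, one obtains
\[ \abs{L - P_m} \leq \sum_{k=0}^{m-1} \binom{m}{k} P_k \leq P_m \sum_{j=1}^{m} \binom{m}{j} 2^{-j(2m-j-1)/2}. \]
A short calculation shows that the ratio of consecutive terms in this last sum equals $((m-j)/(j+1)) \cdot 2^{-(m-j-1)}$, which is at most $1/2$ for all $j\geq 1$ when $m\geq 6$. Hence the sum is bounded by twice its $j=1$ value, i.e.\ by $2m/2^{m-1}$, which is $\leq 3/8$ for $m\geq 6$. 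Combining, $L \geq P_m - (3/8) P_m \geq P_m/2 \geq 2^{m(m-1)/2}$.

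The only non-routine step is the bookkeeping for the geometric-series argument in the last paragraph; everything else is direct manipulation of the formula from the lemma. The threshold $m \geq 6$ arises naturally from where the $j=1$ term of the tail sum drops below a constant smaller than~$1$, allowing $P_m$ to dominate; the constant $3/8$ can be adjusted, but $m \geq 6$ is the cleanest cutoff at which the simple two-step bound ``half of $P_m$, and $P_m \geq 2 \cdot 2^{m(m-1)/2}$'' closes.
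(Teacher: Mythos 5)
Your proof is correct and takes essentially the same route as the paper's: pass to the limit term-by-term in \cref{eq:m_th moment}, then isolate the dominant $k=m$ term $P_m=\prod_{\ell=0}^{m-1}(2^\ell+1)$ and show the absolute value of the remaining alternating sum is at most $P_m/2$, after which $P_m/2\geq 2^{m(m-1)/2}$ gives the bound. The only difference is that your tail estimate uses a slightly sharper ratio test on $\binom{m}{j}2^{-j(2m-j-1)/2}$, whereas the paper peels off the $k=m-1$ term separately and bounds the remainder more crudely via $\binom{m}{k}\leq 2^m$ together with the largest surviving $P_k/P_m$; both yield the same cutoff $m\geq 6$.
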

\begin{proof}
It follows from \cref{eq:m_th moment appendix} that for any fixed~$m$ we have that
\begin{align*}
  \lim_{n\to\infty} \mathbb{E}(\mathbf{X}_n^m)
= \sum_{k=0}^m \binom{m}{k} (-1)^{m-k} \parens*{ \lim_{n\to\infty} \frac {(2^n+1)^m}{2^{n(m-k)} \prod_{\ell=0}^{k-1} (2^\ell + 2^n)} } \prod_{\ell=0}^{k-1}(2^\ell+1)
= \sum_{k=0}^m \binom{m}{k} (-1)^{m-k} \prod_{\ell=0}^{k-1}(2^\ell+1),
\end{align*}
which confirms the formula for the limiting moments.
We now prove the lower bound by similar estimates as in the proof of Theorem $6$:
\begin{align*}
  \lim_{n\to\infty} \mathbb{E}(\mathbf{X}_n^m)
&= \bracks*{ 1 - \frac m {2^{m-1}+1} + \sum_{k=0}^{m-2} \binom{m}{k} (-1)^{m-k} \frac 1 {\prod_{\ell=k}^{m-1}(2^\ell+1)} } \prod_{\ell=0}^{m-1}(2^\ell+1) \\
&\geq \bracks*{ 1 - \frac m {2^{m-1}+1} - \sum_{k=0}^{m-2} \binom{m}{k} \frac 1 {\prod_{\ell=k}^{m-1}(2^\ell+1)} } \prod_{\ell=0}^{m-1}(2^\ell+1) \\
&\geq \bracks*{ 1 - \frac m {2^{m-1}+1} - \frac {2^m} {(2^{m-1}+1)(2^{m-2}+1)} } \prod_{\ell=0}^{m-1}(2^\ell+1) \\
&\geq \bracks*{ 1 - \frac{2m+8} {2^m} } \prod_{\ell=0}^{m-1}(2^\ell+1) \\
&\geq \frac12 \prod_{\ell=0}^{m-1}(2^\ell+1),
\end{align*}
where we used that $(2m+8)/2^m \leq 1/2$ for $m\geq6$.
The desired lower bound follows from the above and the estimate
\begin{equation*}
  \frac12 \prod_{\ell=0}^{m-1}(2^\ell+1)
= \prod_{\ell=1}^{m-1}(2^\ell+1)
\geq \prod_{\ell=1}^{m-1} 2^\ell
= 2^{\sum_{\ell=1}^{m-1} \ell}
= 2^{\frac{m(m-1)}2}.
\qedhere
\end{equation*}
\end{proof}

%=============================================================================
\section{Technical lemmas}
%=============================================================================
Finally we prove some useful, but less interesting, technical statements that were used in the proofs of Theorems $3$ and $4$.
Recall that the commutant of the fourth tensor power action of the Clifford group is parameterized by the set $\Sigma_{4,4} = S_4 \cup \hat{S}_3$.

\begin{lemma}\label{lem:v_bound 1}
For every state~$\rho$, traceless observable~$O$, and~$T\in \Sigma_{4,4}$, we have
\begin{align*}
  \abs{\braakett{R_T}{(O\otimes \rho)\tn{2}}} \leq \tr(O^2).
\end{align*}
\end{lemma}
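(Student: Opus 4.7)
The plan is to split into the two cases $T \in S_4$ and $T \in \hat S_3$, dictated by the decomposition $\Sigma_{4,4} = S_4 \cup \hat{S}_3$.

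For $T = T_\pi$ with $\pi \in S_4$, the quantity $\braakett{R_\pi}{A_1 \otimes A_2 \otimes A_3 \otimes A_4}$ factors as a product of cycle traces $\prod_{c}\tr\bigl(\prod_{i \in c} A_i\bigr)$ in cyclic order. Setting $A_1 = A_3 = O$ and $A_2 = A_4 = \rho$, I would case on how the two $O$'s distribute across cycles: (i) if both $O$'s lie in the same cycle, the responsible factor is of the form $\tr(O\rho^a O\rho^b)$, and Cauchy--Schwarz $|\tr((O\rho^a)(O\rho^b))| \leq \|O\rho^a\|_{\HS}\|O\rho^b\|_{\HS}$ combined with $\|O\rho^c\|_\HS^2 = \tr(O^2\rho^{2c}) \leq \|O\|_\infty^2 \, \tr(\rho^{2c}) \leq \tr(O^2)$ gives a bound of $\tr(O^2)$; the remaining cycles are purely $\rho$ and contribute $|\tr(\rho^k)|\leq 1$. (ii) If the $O$'s lie in distinct cycles, either there is a length-one $O$-cycle contributing $\tr(O)=0$, or each single-$O$ cycle contributes $|\tr(O\rho^{k-1})|\leq\|O\|_\infty\leq\|O\|_{\HS}$, and the product of two such factors is bounded by $\|O\|_{\HS}^2 = \tr(O^2)$. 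In every sub-case the bound $|\braakett{R_\pi}{(O\otimes\rho)^{\otimes 2}}| \leq \tr(O^2)$ follows.

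For $T \in \hat S_3$, I would write $R_T = R_\pi \Pi_4 = \Pi_4 R_\pi$ with $\pi \in S_3 \subset S_4$ (using that $\Pi_4$ is $S_4$-invariant) and invoke the explicit formula $\Pi_4 = 2^{-n}\sum_{P\in\mathbb{P}_n}P^{\otimes 4}$ from \cref{eq:Pi_4} to rewrite
\begin{align*}
 \braakett{R_T}{(O\otimes\rho)^{\otimes 2}}
 = 2^{-n}\sum_{P\in\mathbb{P}_n}\prod_{c}\tr\Bigl(\prod_{i\in c}PA_i\Bigr).
\end{align*}
Expanding $O = \sum_R o_R R$ and $\rho = 2^{-n}\sum_R \rho_R R$ in the Pauli basis and repeatedly applying $PRP = \chi_{P,R}R$ and $\tr(PR)=2^n\delta_{P,R}$, the inner sum over $P$ collapses via the character identity $\sum_P \chi_{P,Q} = 4^n \delta_{Q \propto \1}$ and forces ``coincidence'' constraints (such as $R_1 = R_3$ and $R_2 = R_4$, or variants depending on the cycle type of $\pi$). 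The resulting double Pauli sums take the form $\sum_R o_R^2 \rho_R^2$ or $\sum_{P,Q} o_P \rho_P o_Q \rho_Q \chi_{P,Q}$; using $\sum_R o_R^2 = 2^{-n}\tr(O^2)$, $\sum_R \rho_R^2 = 2^n \tr(\rho^2)$, and the Hadamard-like bound $\norm{\chi} = 2^n$ (which follows from $\chi^2 = 4^n\1$), each such sum is bounded by $\tr(O^2)\tr(\rho^2) \leq \tr(O^2)$.

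The main obstacle is the non-permutation case, where each of the three cycle types of $\pi\in S_3$ (identity, transposition, three-cycle) demands a separate manipulation of Pauli commutator signs. The key structural fact that makes everything work is $\Pi_4^2 = 2^n \Pi_4$, i.e., that $\Pi_4$ is $2^n$ times a rank-$2^{2n}$ projection; this is ultimately what is responsible for the coincidence constraints being strong enough to offset the $2^{-n}$ prefactor and the exponentially many Paulis appearing in the sum, yielding the correct $\tr(O^2)$ scaling.
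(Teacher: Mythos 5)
Your handling of the $S_4$ case is correct and essentially matches the paper's argument: after organizing $\braakett{R_\pi}{(O\otimes\rho)^{\otimes 2}}$ into cycle traces, you bound the cycle containing the two $O$'s by Cauchy--Schwarz and absorb the pure-$\rho$ cycles by $\abs{\tr(\rho^k)}\leq 1$, with tracelessness of $O$ killing the singleton-$O$ cases. (Minor note: the chain $\norm{O\rho^c}_{\HS}^2\leq\norm{O}_\infty^2\tr(\rho^{2c})$ gives $\tr(O)\cdot 2^n$ when $c=0$; for $c=0$ you should just use $\norm{O}_{\HS}^2=\tr(O^2)$ directly.)

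For $T\in\hat S_3$ there is a genuine gap. Your plan to expand $O$ and $\rho$ in the Pauli basis and collapse the $P$-sum via character identities is sound in principle, but the bound you claim does not come out. Take the cleanest case $\pi=e$: the expression is $2^{-n}\sum_P\tr(PO)^2\tr(P\rho)^2$. With your normalization ($o_P=2^{-n}\tr(OP)$, $\rho_P=\tr(\rho P)$, so that $\sum_P o_P^2 = 2^{-n}\tr(O^2)$ and $\sum_P\rho_P^2=2^n\tr(\rho^2)$), this equals $2^n\sum_P o_P^2\rho_P^2$, not $\sum_P o_P^2\rho_P^2$; and bounding $\sum_P o_P^2\rho_P^2 \leq \parens*{\sum_P o_P^2}\parens*{\sum_P\rho_P^2}=\tr(O^2)\tr(\rho^2)$ then gives the overall estimate $2^n\tr(O^2)\tr(\rho^2)$, too large by a factor of $2^n$. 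The same overcount appears in the bilinear $\chi$-form: $\abs{\sum_{P,Q}o_P\rho_P\,\chi_{P,Q}\,o_Q\rho_Q}\leq\norm{\chi}\sum_P o_P^2\rho_P^2=2^n\sum_P o_P^2\rho_P^2$, and double Parseval again leaves a residual $2^n$. The correct fix is to use the \emph{pointwise} bound $\abs{\tr(\rho P)}\leq 1$ (so $\max_P\rho_P^2\leq 1$), which gives $\sum_P o_P^2\rho_P^2\leq\sum_P o_P^2=2^{-n}\tr(O^2)$ and cancels the $2^n$ exactly; in some cases $\norm{\rho}_{\HS}\leq 1$ is the analogous load-bearing inequality. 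The paper avoids this entirely: it never expands $O,\rho$ in the Pauli basis, and instead estimates $2^{-n}\sum_P(\cdots)$ directly via Parseval (which produces $\norm{O}_{\HS}$), Cauchy--Schwarz, H\"older, and the two facts $\abs{\tr(\rho P)}\leq 1$ and $\norm{\rho}_{\HS}\leq 1$. Relatedly, your closing claim that $\Pi_4^2=2^n\Pi_4$ is ``ultimately responsible'' for the bound is not quite right: what actually offsets the $2^{-n}$ prefactor is the uniform boundedness of the Pauli coefficients of a density matrix, not the near-projection property of $\Pi_4$.
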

\begin{proof}
We first prove the claim for $\pi\in S_4\subset \Sigma_{4,4}$.
Clearly,
\begin{align*}
  \braakett{R_\pi}{(O\otimes \rho)\tn{2}}
= \tr \mleft( R_\pi (O \otimes \rho \otimes O \otimes \rho) \mright)
\end{align*}
is an arbitrary product of traces of products of the operators~$O$ and~$\rho$, subject only to the constraint that each of $O$ and $\rho$ should appear exactly twice.
Using $\tr(\rho)=1$, $\tr(O)=0$, and the cyclicity of the trace, the following estimates imply the desired bound for any $\pi\in S_4$,
\begin{align*}
  \abs{\tr(O \rho)}^2 &\leq \tr(O^2) \tr(\rho^2) \leq \tr(O^2), \\
  \abs{\tr(O^2) \tr(\rho^2)} &\leq \tr(O^2), \\
  \abs{\tr(O^2 \rho)} &\leq \tr(O^2), \\
  \abs{\tr(O^2 \rho^2)} &\leq \tr(O^2), \\
  \abs{\tr(O \rho O \rho)} &\leq \tr(O^2 \rho) \leq \tr(O^2),
\end{align*}
which follow from the Cauchy-Schwarz inequality and~$0\leq \rho\leq I$.

It remains to prove the claim for $T \in \hat{S}_3$.
Recall that~$T\in \hat{S}_3~$ implies~$R_T = R_\pi \Pi_4$ for some~$\pi \in S_3 \subseteq S_4$, with~$\Pi_4 = 2^{-n} \sum_{P \in \mathbb{P}_n} P^{\ot 4}$, hence
\begin{align*}
  \braakett{R_T}{(O\otimes \rho)\tn{2}}
= \tr \mleft( \Pi_4 R_{\pi^{-1}} (O \otimes \rho \otimes O \otimes \rho) \mright).
\end{align*}
Clearly, $\Pi_4$ commutes with arbitrary permutations, while $(O \otimes \rho \otimes O \otimes \rho)$ in particular commutes with the permutation~$(1 3)$.
This means that it suffices to verify the claim for $\pi \in \{ e, (1 2), (1 3), (1 2 3) \}$.
For $\pi=e$, the identity permutation, noting that~$\abs{\Tr(\rho P)} \leq 1$ gives
\begin{align*}
  \abs*{ \tr \mleft(\Pi_4 (O \otimes \rho \otimes O \otimes \rho)\mright)}
% = 2^{-n} \sum_{P \in \mathbb{P}_n} \tr P^{\ot 4} (O \otimes \rho \otimes O \otimes \rho)
= 2^{-n} \sum_{P \in \mathbb{P}_n} \parens*{ \tr (P O) }^2 \big(\tr (P \rho)\big)^2
\leq 2^{-n} \sum_{P \in \mathbb{P}_n} \parens*{ \tr (P O) }^2
= \tr(O^2),
\end{align*}
where the last follows by Parseval's identity since $\{2^{-n/2} P\}_{P \in \mathbb{P}_n}$ is an orthonormal basis of the space of $n$-qubit operators.
For~$\pi=(1 2)$, by the Cauchy-Schwarz inequality and again Parseval's identity,
\begin{align*}
  \abs*{ \tr \mleft( \Pi_4 R_{(1 2)} (O \otimes \rho \otimes O \otimes \rho) \mright)}
% = 2^{-n} \abs*{ \sum_{P \in \mathbb{P}_n} \tr P^{\ot 4} R_{(1 2)} (O \otimes \rho \otimes O \otimes \rho) }
&= 2^{-n} \abs*{ \sum_{P \in \mathbb{P}_n} \tr(OP\rho P) \tr(OP) \tr(\rho P) }\\
&\leq \norm O_\infty 2^{-n} \sum_{P \in \mathbb{P}_n} \abs*{\tr(OP)} \abs*{\tr(\rho P)} \\
% \leq \norm O_\infty \sqrt{2^{-n} \sum_{P \in \mathbb{P}_n} \abs*{\tr(OP)}^2} \sqrt{2^{-n} \sum_{P \in \mathbb{P}_n} \abs*{\tr(\rho P)}} =
&\leq \norm O_\infty \norm O_{\HS} \norm \rho_{\HS}
\leq \norm O_{\HS}^2 = \tr (O^2).
\end{align*}
For $\pi=(1 3)$,
\begin{align*}
  \abs*{ \tr \mleft( \Pi_4 R_{(1 3)} (O \otimes \rho \otimes O \otimes \rho) \mright)}
% = 2^{-n} \abs*{ \sum_{P \in \mathbb{P}_n} \tr P^{\ot 4} R_{(1 3)} (O \otimes \rho \otimes O \otimes \rho) }
= 2^{-n} \abs*{ \sum_{P \in \mathbb{P}_n} \tr (OPOP) \tr(\rho P) \tr(\rho P) }
% \leq \norm{O}_{\HS}^2 2^{-n} \sum_{P \in \mathbb{P}_n} \abs*{ \tr(\rho P) }^2
\leq \norm{O}_{\HS}^2 \norm{\rho}_{\HS}^2
\leq \tr(O^2),
\end{align*}
where we used Parseval's identity and that $\abs*{\tr (OPOP)} \leq \norm{O}_{\HS} \norm{P O P}_\HS = \norm{O}_{\HS}^2$ by the Cauchy-Schwarz inequality.
Finally, for~$\pi=(1 3 2)$,
\begin{align*}
  \abs*{ \tr \mleft( \Pi_4 R_{(1 2 3)} (O \otimes \rho \otimes O \otimes \rho) \mright)}
% = 2^{-n} \sum_{P \in \mathbb{P}_n} \tr P^{\ot 4} R_{(1 2 3)} (O \otimes \rho \otimes O \otimes \rho)
&= 2^{-n} \abs*{ \sum_{P \in \mathbb{P}_n} \tr (OP\rho POP) \tr(\rho P) }
% \leq 2^{-n} \sum_{P \in \mathbb{P}_n} \tr (OP\rho PO) \abs*{ \tr(\rho P) }
\leq 2^{-n} \sum_{P \in \mathbb{P}_n} \tr (OP\rho PO),
= \tr(O^2),
\end{align*}
where the first inequality holds because $\abs{\tr(AB)} \leq \tr(A)\norm{B}_\infty$ when $A$ is positive semidefinite and $B$ is Hermitian (this is H\"older's inequality), and the final step is due to~$2^{-n} \sum_{P \in \mathbb{P}_n} P \rho P = \Tr(\rho) I = I$ (Schur's lemma).
This concludes the proof of the lemma.
\end{proof}

\begin{lemma}\label{lem:v_bound 2}
For every $x,\hat x\in\{0,1\}^n$, we have
\begin{align*}
  \braakett{x\tn{2}\otimes \hat{x}\tn{2}}{R_T}
= \begin{cases}
  1 & \text{ if } T \in \{e, (12), (34), (12)(34), T_4, (1 2) T_4 \}, \\
  \delta_{x,\hat x} & \text{ otherwise }.
\end{cases}
\end{align*}
where $T = \pi T_4$ for $\pi\in S_3$ denotes the subspace corresponding to $R_T = R_\pi \Pi_4$, see the discussion surrounding \cref{eq:Pi_4}.
\end{lemma}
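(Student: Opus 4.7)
The plan is to reduce the $n$-qubit overlap to a product of single-qubit overlaps by exploiting the factorization $R_T = r_T^{\otimes n}$ from \cref{eq:def r_T}. Regrouping the four tensor factors in $\kett{x^{\otimes 2}\otimes\hat x^{\otimes 2}}$ so that the four copies of each qubit sit adjacent, and using that $\kett{x^{\otimes 2}\otimes\hat x^{\otimes 2}}$ is the Liouville vector of $\ket{x,x,\hat x,\hat x}\bra{x,x,\hat x,\hat x}$, I obtain
\begin{equation*}
  \braakett{x^{\otimes 2}\otimes\hat x^{\otimes 2}}{R_T}
  = \prod_{i=1}^n \bra{x_i, x_i, \hat x_i, \hat x_i}\, r_T\, \ket{x_i, x_i, \hat x_i, \hat x_i},
\end{equation*}
so it suffices to evaluate a single-qubit matrix element of $r_T$ on the basis state indexed by $(a,b) := (x_i,\hat x_i)\in\{0,1\}^2$ and then take the product over~$i$.

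For $T = T_\pi$ with $\pi \in S_4$, the single-qubit factor $\bra{a,a,b,b}\, r_\pi \ket{a,a,b,b}$ is $1$ exactly when $\pi$ preserves the word $(a,a,b,b)$ and $0$ otherwise. This is automatic when $a = b$, and for $a \neq b$ it holds precisely when $\pi$ preserves the partition $\{\{1,2\},\{3,4\}\}$, i.e.\ when $\pi \in H_1 := \{e,(12),(34),(12)(34)\}$. Taking the product over $i$ therefore gives $1$ when $\pi \in H_1$ or $x = \hat x$, and $0$ otherwise, which recovers the claimed value $1$ for the four permutation subspaces $e,(12),(34),(12)(34)$ and $\delta_{x,\hat x}$ for the remaining $20$ permutations in $S_4$.

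For $T = \pi T_4 \in \hat S_3$, I would use $R_T = R_\pi \Pi_4$ from \cref{eq:Pi_4} and the resulting single-qubit factorization $r_T = r_\pi \cdot \tfrac12 \sum_{P \in \{I,X,Y,Z\}} P^{\otimes 4}$ to reduce to a short Pauli calculation. Direct inspection gives $\sum_{P} P^{\otimes 4} \ket{a,a,b,b} = 2\ket{a,a,b,b} + 2\ket{\bar a,\bar a,\bar b,\bar b}$, since $I$ and $Z$ leave the state invariant (any sign from $Z$ squares away over four copies) while $X$ and $Y$ flip all four entries with phases that multiply to $+1$. The single-qubit factor therefore becomes
\begin{equation*}
  \bra{a,a,b,b} r_\pi \ket{a,a,b,b} + \bra{a,a,b,b} r_\pi \ket{\bar a,\bar a,\bar b,\bar b},
\end{equation*}
which equals $1$ when $a = b$, and when $a \neq b$ equals the indicator that $\pi \in H_1 \cup H_2$, where $H_2 := \{(13)(24),(14)(23),(1324),(1423)\}$ is the set of permutations swapping $\{1,2\}$ with $\{3,4\}$ (obtained from requiring $(b,b,a,a)$ to be permuted into $(a,a,b,b)$, using that in the binary case $\bar a = b$ and $\bar b = a$). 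Multiplying over $i$ then produces $1$ if $\pi \in H_1\cup H_2$ and $\delta_{x,\hat x}$ otherwise.

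The final step, where I expect the most bookkeeping care, is the coset analysis. One checks directly that $H_1 \cup H_2$ is exactly the union of the two cosets $K_4$ and $(12)K_4$ of the Klein stabilizer $K_4=\{e,(12)(34),(13)(24),(14)(23)\}$, which is the stabilizer of $R_{T_4}$ noted in the footnote after \cref{eq:Pi_4}. Since the six elements of $\hat S_3$ are parameterized by the six $K_4$-cosets in $S_4$, exactly the representatives $\pi = e$ and $\pi = (12)$ lie in $H_1 \cup H_2$, producing value $1$ for $T = T_4$ and $T = (12) T_4$, while the other four representatives contribute $\delta_{x,\hat x}$. Combining the $S_4$ and $\hat S_3$ pieces yields the six subspaces listed in the lemma with value $1$ and $\delta_{x,\hat x}$ for all $24$ remaining elements of $\Sigma_{4,4}$, completing the plan.
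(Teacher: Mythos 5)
Your proposal is correct. For $\pi \in S_4$ your argument is essentially the one the paper sketches in a single sentence, just made explicit by factorizing the overlap over qubit indices into the single-qubit matrix element $\bra{a,a,b,b}r_\pi\ket{a,a,b,b}$. For $T \in \hat S_3$ your route genuinely differs from the paper's: you stay at the single-qubit level, expand the Pauli twirl $\tfrac12\sum_P P^{\ot4}$ directly on $\ket{a,a,b,b}$ to get $\ket{a,a,b,b}+\ket{\bar a,\bar a,\bar b,\bar b}$, reduce the factor to an indicator of $\pi\in H_1\cup H_2$, and finish with the coset computation $H_1\cup H_2=K_4\cup(12)K_4$. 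The paper instead works at the $n$-qubit level: it exploits that $\pi\in S_3$ fixes the fourth tensor slot, takes the partial trace there to kill all Paulis except $Z^b$, cancels the resulting phases, and collapses everything to $\tr\bigl[(\dens{x}^{\ot2}\ot\dens{\hat x})R_\pi\bigr]$, from which the answer for $\pi\in\{e,(12)\}$ is immediate without any coset bookkeeping. Both routes are valid and yield identical conclusions; the paper's is shorter and avoids the $H_2$ and $K_4$-coset analysis entirely (a sanity check rather than a logical necessity, since you already compute with fixed representatives in $S_3$), while yours makes the single-qubit structure and the role of the Klein stabilizer more explicit.
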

\begin{proof}
For $\pi\in S_4$ the claim is clear, since $\braakett{x\tn{2}\otimes \hat{x}\tn{2}}{R_\pi}$ contains a factor~$\delta_{x,\hat x}$ unless~$\pi \in \{e, (12), (34), (12)(34)\}$.
Thus it remains to consider $T \in \hat{S}_3$ and hence $R_T = R_\pi \Pi_4$ for~$\Pi_4 = 2^{-n} \sum_{P \in \mathbb{P}_n} P^{\ot 4}$ and some~$\pi \in S_3 \subseteq S_4$.
Then,
\begin{align*}
  \braakett{x\tn{2}\otimes \hat{x}\tn{2}}{R_T}
&= 2^{-n} \sum_{P \in \mathbb{P}_n} \tr \bracks*{ \parens*{ \dens{x}^{\ot 2} \ot \dens{\hat{x}}^{\ot 2} } R_\pi P^{\ot 4} } \\
&= 2^{-n} \sum_{P \in \mathbb{P}_n} \tr \bracks*{ \parens*{ \dens{x}^{\ot 2} \ot \dens{\hat{x}} } R_\pi P^{\ot 3} } \bra{\hat{x}} P \ket{\hat{x}} \\
&= 2^{-n} \sum_{b \in \{0,1\}^n} (-1)^{\hat{x} \cdot b} \tr \bracks*{ \parens*{ \dens{x}^{\ot 2} \ot \dens{\hat{x}} } R_\pi (Z^b)^{\ot 3} } \\
&= \tr \bracks*{ \parens*{ \dens{x}^{\ot 2} \ot \dens{\hat{x}} } R_\pi },
\end{align*}
where from the second line onward by a slight abuse of notation we think of $R_\pi$ as an operator on $((\mathbb C^2)^{\ot n})^{\ot 3}$.
It is clear that the above expression is zero for $x \neq \hat{x}$ unless $\pi \in \{e,(12)\}$.
\end{proof}

\begin{lemma}\label{lem:t_gate}
Let $\mc{T}$ denote the quantum channel acting by the $\T$-gate $\T = \begin{psmallmatrix} 1& 0 \\ 0 & e^{i\pi/4}\end{psmallmatrix}$ on the first qubit of an $n$-qubit state.
Then we have, for every $T,T'\in\hat{S}_3$, that
\begin{align*}
  \braa{R_T} \mc{T}^{\ot4} \kett{R_{T'}}
= \parens*{ \braakett{r_T}{r_{T'}} - 4 } \braakett{r_T}{r_{T'}}^{n-1}
= \begin{cases}
  (2^4 - 4) 2^{4(n-1)} = \frac34 2^{4n} & \text{ if  }\; T=T', \\
  \leq (2^3-4) 2^{3(n-1)} = \frac12 2^{3n} & \text{ if  }\; T\neq T'.
\end{cases}
\end{align*}
\end{lemma}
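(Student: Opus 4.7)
The first step is to exploit the tensor structure. After regrouping qubits so that $((\CC^2)^{\otimes n})^{\otimes 4} \cong ((\CC^2)^{\otimes 4})^{\otimes n}$, collecting the $i$-th qubit of each of the four tensor copies into the $i$-th block, we have $R_T = r_T^{\otimes n}$ and $R_{T'} = r_{T'}^{\otimes n}$, while $\mc{T}^{\otimes 4}$ acts nontrivially only on the first block. Since the Liouville inner product factors over tensor products,
\[
\braa{R_T}\, \mc{T}^{\otimes 4}\, \kett{R_{T'}} = \braa{r_T}\, \mc{T}^{\otimes 4}\, \kett{r_{T'}} \cdot \braakett{r_T}{r_{T'}}^{n-1},
\]
with $\braakett{r_T}{r_{T'}} = \abs{T \cap T'}$ equal to $2^4$ if $T = T'$ and at most $2^3$ otherwise. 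The two cases of the lemma thus reduce to establishing the one-block identity $\braa{r_T}\, \mc{T}^{\otimes 4}\, \kett{r_{T'}} = \braakett{r_T}{r_{T'}} - 4$ for all $T, T' \in \hat{S}_3$.

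The second step is to factor out the Pauli projector $\pi_4 := \tfrac12 \sum_{p \in \mathbb{P}_1} p^{\otimes 4}$, the one-block analog of $\Pi_4$ (so that $\Pi_4 = \pi_4^{\otimes n}$ after regrouping). For $T = \sigma T_4$ and $T' = \tau T_4$ with $\sigma, \tau$ chosen in $S_3 \subseteq S_4$ as permutations fixing the fourth copy, we have $r_T = r_\sigma \pi_4$ and $r_{T'} = r_\tau \pi_4$. Permutation operators commute both with $\pi_4$ and with $\T^{\otimes 4}$ (since the four copies are interchangeable), so a short manipulation using the cyclicity of the trace gives
\[
\braa{r_T}\, \mc{T}^{\otimes 4}\, \kett{r_{T'}} = \tr(\pi_4\, \tilde\pi_4\, r_\rho), \qquad \braakett{r_T}{r_{T'}} = 2\, \tr(\pi_4\, r_\rho),
\]
where $\rho := \sigma^{-1}\tau \in S_3$ and $\tilde\pi_4 := \T^{\otimes 4}\, \pi_4\, (\T^\dagger)^{\otimes 4}$. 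The task thus reduces to verifying $\tr((\pi_4 \tilde\pi_4 - 2\pi_4)\, r_\rho) = -4$ for each $\rho \in S_3$.

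The third step is to compute $\tilde\pi_4$ explicitly and evaluate the trace. Using the single-qubit conjugations $\T X \T^\dagger = (X+Y)/\sqrt 2$, $\T Y \T^\dagger = (-X+Y)/\sqrt 2$, $\T Z \T^\dagger = Z$, and the observation that $(X+Y)^{\otimes 4} + (-X+Y)^{\otimes 4}$ retains only the Pauli monomials with an even number of $X$ factors, one obtains
\[
\tilde\pi_4 = \tfrac12 (I + Z^{\otimes 4}) + \tfrac14 (X^{\otimes 4} + Y^{\otimes 4}) + \tfrac14 \sum_{\abs{S}=2} X_S Y_{S^c},
\]
where $X_S Y_{S^c}$ denotes the Pauli operator with $X$ on sites in $S$ and $Y$ elsewhere. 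Since $\pi_4 \tilde\pi_4 - 2\pi_4$ is invariant under simultaneous permutations of the four copies, the trace $\tr((\pi_4 \tilde\pi_4 - 2\pi_4)\, r_\rho)$ depends only on the cycle type of $\rho$, so it suffices to check the three representatives $\rho \in \{e, (1\,2), (1\,2\,3)\}$. Each case is handled by expanding in the Pauli basis and using the cycle-product formula $\tr(r_\pi \bigotimes_{i=1}^4 A_i) = \prod_{\text{cycles } c} \tr(\prod_{i \in c} A_i)$; the stabilizer identity $\pi_4 P^{\otimes 4} = \pi_4$ for $P \in \{X, Y, Z\}$ (since $\{I, X^{\otimes 4}, Y^{\otimes 4}, Z^{\otimes 4}\}$ is a Klein four-group acting trivially on $\pi_4$) shortens the bookkeeping considerably. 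In all three cases the trace evaluates to $-4$, which completes the identity. The bound for $T \neq T'$ then follows because $(x - 4)x^{n-1}$ is maximized at $x = 2^3$ over the two possible nondiagonal values $\braakett{r_T}{r_{T'}} \in \{4, 8\}$, attained at $\rho$ a transposition.

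The main obstacle is the case-by-case trace evaluation in the final step; although each individual computation is mechanical, careful attention is required to track which Pauli monomials survive the trace against each permutation and to apply the available commutation and stabilization identities that keep the bookkeeping manageable.
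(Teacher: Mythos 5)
Your proposal is correct and follows essentially the same route as the paper's proof: reduce to the single-block ($n=1$) quantity via the tensor factorization, express everything in terms of $\pi_4$ and $\tilde\pi_4 = \mathcal{T}^{\otimes 4}\pi_4$, and evaluate the trace against $r_\rho$. The only difference is in the final step: you propose a cycle-type case check over $\rho \in \{e, (1\,2), (1\,2\,3)\}$ (which does work out to $-4$ in each case), whereas the paper short-circuits this by rewriting the correction terms as $-\dens{0}^{\otimes 3}\otimes I - \dens{1}^{\otimes 3}\otimes I$, whose trace against any $r_\pi$ with $\pi\in S_3$ is manifestly $-4$ without casework.
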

\begin{proof}
The second identity follows from the first, since $\braakett{r_T}{r_{T'}} \in \{2^2, 2^3, 2^4\}$, with $\braakett{r_T}{r_{T'}}=2^4$ if and only if $T=T'$.
For the first identity we only need to show that
\begin{align*}
  \braa{r_T} \mc{T}^{\ot4} \kett{r_{T'}} = \braakett{r_T}{r_{T'}} - 4.
\end{align*}
Let us write~$T = \pi T_4$ and~$T' = \pi' T_4$ for $\pi,\pi'\in S_3$.
Then,
\begin{align*}
  \braa{r_T} \mc{T}^{\ot4} \kett{r_{T'}}
= \frac14 \sum_{P,P'\in\mathbb{P}_1} \tr \mleft( P^{\ot 4} r_{\pi^{-1}} \T^{\ot4} r_{\pi'} (P')^{\ot4} (\T^{\dagger})^{\ot4} \mright)
= \frac14 \sum_{P,P'\in\mathbb{P}_1} \tr \mleft( \parens*{ P \T P' \T^\dagger }^{\ot4} r_{\pi' \pi^{-1}} \mright).
\end{align*}
Noting that
\begin{align*}
  \T I \T^\dagger = I,
\qquad
  \T X \T^\dagger = \frac{X + Y}{\sqrt2},
\qquad
  \T Y \T^\dagger = \frac{-X + Y}{\sqrt2},
\qquad
  \T Z \T^\dagger = Z,
\end{align*}
we find that
\begin{align*}
\sum_{P'\in\mathbb{P}_1} \parens*{ \T P' \T^\dagger }^{\ot4}
&= I^{\ot4} + \parens*{ \frac{X + Y}{\sqrt2} }^{\ot4} + \parens*{ \frac{-X + Y}{\sqrt2} }^{\ot4} + Z^{\ot4} \\
% &= I^{\ot4} + \frac14 \parens*{ X+Y }^{\ot4} + \frac14 \parens*{ -X+Y }^{\ot4} + Z^{\ot4}
% &= I^{\ot4} + X^{\ot4} + Y^{\ot4} + Z^{\ot4} + \frac12 \parens*{ -X^{\ot4} - Y^{\ot4} + X\ot X\ot Y\ot Y + X\ot Y\ot X\ot Y + X\ot Y\ot Y\ot X + Y\ot X\ot X\ot Y + Y\ot X\ot Y\ot X + Y\ot Y\ot X\ot X } \\
&= \sum_{P'\in\mathbb{P}_1} \parens*{ P' }^{\ot4} + \frac12 \bigl( -X^{\ot4} - Y^{\ot4} + X\ot X\ot Y\ot Y + X\ot Y\ot X\ot Y + X\ot Y\ot Y\ot X \\
&\hspace{14em}\;\;\, + Y\ot X\ot X\ot Y + Y\ot X\ot Y\ot X + Y\ot Y\ot X\ot X \bigr).
\end{align*}
Noting that $r_{\pi' \pi^{-1}}$ acts as the identity on the fourth qubit and using the product rules of the Pauli group, we obtain
\begin{align*}
  \braa{r_T} \mc{T}^{\ot4} \kett{r_{T'}}
% = \frac14 \sum_{P\in\mathbb{P}_1} \tr \mleft( \parens*{ PP' }^{\ot4} r_{\pi' \pi^{-1}} \mright)
% - \frac18 \sum_{P\in\mathbb{P}_1} \tr \mleft( \parens*{ PX }^{\ot4} r_{\pi' \pi^{-1}} \mright)
% - \frac18 \sum_{P\in\mathbb{P}_1} \tr \mleft( \parens*{ PY }^{\ot4} r_{\pi' \pi^{-1}} \mright) \\
% + \frac18 \sum_{P\in\mathbb{P}_1} \tr \mleft( \parens*{ PX \ot PX \ot PY \ot PY} r_{\pi' \pi^{-1}} \mright)
% + \frac18 \sum_{P\in\mathbb{P}_1} \tr \mleft( \parens*{ PX \ot PY \ot PX \ot PY} r_{\pi' \pi^{-1}} \mright)
% + \frac18 \sum_{P\in\mathbb{P}_1} \tr \mleft( \parens*{ PX \ot PY \ot PY \ot PX} r_{\pi' \pi^{-1}} \mright) \\
% + \frac18 \sum_{P\in\mathbb{P}_1} \tr \mleft( \parens*{ PY \ot PX \ot PX \ot PY} r_{\pi' \pi^{-1}} \mright)
% + \frac18 \sum_{P\in\mathbb{P}_1} \tr \mleft( \parens*{ PY \ot PX \ot PY \ot PX} r_{\pi' \pi^{-1}} \mright)
% + \frac18 \sum_{P\in\mathbb{P}_1} \tr \mleft( \parens*{ PY \ot PY \ot PX \ot PX} r_{\pi' \pi^{-1}} \mright) \\
&= \braakett{r_T}{r_{T'}}
- \frac14  \tr \mleft( I^{\ot4} r_{\pi' \pi^{-1}} \mright)
- \frac14  \tr \mleft( \parens*{ I \ot Z \ot Z \ot I} r_{\pi' \pi^{-1}} \mright) \\
&\qquad\qquad\quad\; - \frac14  \tr \mleft( \parens*{ Z \ot I \ot Z \ot I} r_{\pi' \pi^{-1}} \mright)
- \frac14  \tr \mleft( \parens*{ Z \ot Z \ot I \ot I} r_{\pi' \pi^{-1}} \mright) \\
% &= \braakett{r_T}{r_{T'}}
% - \sum_{P\in\mathbb{P}_1} \tr \mleft( \parens*{ \frac{I+Z}2 }^{\ot3} \ot I) r_{\pi' \pi^{-1}} \mright)
% - \sum_{P\in\mathbb{P}_1} \tr \mleft( \parens*{ \frac{I-Z}2 }^{\ot3} \ot I) r_{\pi' \pi^{-1}} \mright) \\
&= \braakett{r_T}{r_{T'}}
- \tr \mleft( ( \dens{0}^{\ot3} \ot I) r_{\pi' \pi^{-1}} \mright)
- \tr \mleft( ( \dens{1}^{\ot3} \ot I) r_{\pi' \pi^{-1}} \mright) \\
&= \braakett{r_T}{r_{T'}} - 4.
\end{align*}
where the second step follows from $\dens0 = (I+Z)/2$ and $\dens1 = (I-Z)/2$.
\end{proof}

\end{widetext}
\end{document}